\documentclass[leqno, a4paper]{article}
\usepackage[left=2cm,right=2cm, top=2.5cm,bottom=2.5cm,bindingoffset=0cm]{geometry}
\usepackage{tikz, titlesec}
\usetikzlibrary{calc,intersections,through,backgrounds}
\usepackage{amsmath, amssymb, amsthm}
\usepackage{mathtools}

\mathtoolsset{showonlyrefs}

\usepackage{array, hyperref}

\titleformat{\section}{\large\bfseries\filcenter}{\thesection}{1em}{}
\titleformat{\subsection}{\bfseries}{\thesubsection}{1em}{}
\titleformat{\subsubsection}[runin]{\bfseries}{\thesubsubsection}{1em}{}[.]

\usetikzlibrary{arrows} 
\tikzset{
    >=stealth',
    pil/.style={
           ->,
           thick,
           shorten <=2pt,
           shorten >=2pt,}
}

\newcommand{\Z}{\mathbb{Z}}
\newcommand{\C}{\mathbb{C}}

\renewcommand{\P}{\mathbb{P}}

\newcommand{\Id}{\mathrm{Id}}
\newcommand{\tr}{\mathrm{trace}}

\newtheorem{lemma}{Lemma}[section]

\newtheorem{theorem}[lemma]{Theorem}
\newtheorem{corollary}[lemma]{Corollary}

\theoremstyle{definition}
\newtheorem{remark}[lemma]{Remark}
\newtheorem{definition}[lemma]{Definition}

\title{Pentagrams, inscribed polygons, and Prym varieties}    
\author{Anton Izosimov\thanks{Department of Mathematics, University of Toronto, e-mail: \tt{izosimov@math.utoronto.ca}}}
\date{}

\begin{document}
\maketitle
\abstract{The pentagram map is a discrete integrable system on the moduli space of planar polygons. The corresponding first integrals are so-called monodromy invariants $E_1, O_1, E_2, O_2,\dots$ By analyzing the combinatorics of these invariants,  R.\,Schwartz and S.\,Tabachnikov have recently proved that for polygons inscribed in a conic section one has $E_k = O_k$ for all $k$. In this paper we give a simple conceptual proof of the Schwartz-Tabachnikov theorem. Our main observation is that for inscribed polygons the corresponding monodromy satisfies a certain self-duality relation. From this we also deduce that the space of inscribed polygons with fixed values of the monodromy invariants is an open dense subset in the Prym variety (i.e., a half-dimensional torus in the Jacobian) of the spectral curve. As a byproduct, we also prove another conjecture of Schwartz and Tabachnikov on positivity of monodromy invariants for convex polygons.

}

\section{Introduction and main results}

The pentagram map was introduced by R.\,Schwartz \cite{schwartz1992pentagram} in 1992, and is now one of the most renowned discrete integrable systems which has deep connections with many different subjects such as projective geometry, integrable PDEs, cluster algebras, etc.  
The definition of the pentagram map is illustrated in Figure 1: the image of the polygon $P$ under the pentagram map is the polygon $P'$ whose vertices are the intersection points of consecutive ``short'' diagonals of~$P$ (i.e.,  diagonals connecting second-nearest vertices). 
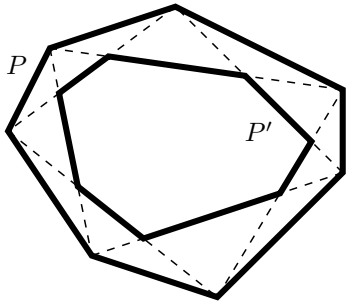
\begin{figure}[h]
\centering
\begin{tikzpicture}[thick, scale = 1.1]
\coordinate (A) at (0,0);
\coordinate (B) at (1.5,-0.5);
\coordinate (C) at (3,1);
\coordinate (D) at (3,2);
\coordinate (E) at (1,3);
\coordinate (F) at (-0.5,2.5);
\coordinate (G) at (-1,1.5);

\draw  [line width=0.8mm]  (A) -- (B) -- (C) -- (D) -- (E) -- (F) -- (G) -- cycle;
\draw [dashed, line width=0.2mm, name path=AC] (A) -- (C);
\draw [dashed,line width=0.2mm, name path=BD] (B) -- (D);
\draw [dashed,line width=0.2mm, name path=CE] (C) -- (E);
\draw [dashed,line width=0.2mm, name path=DF] (D) -- (F);
\draw [dashed,line width=0.2mm, name path=EG] (E) -- (G);
\draw [dashed,line width=0.2mm, name path=FA] (F) -- (A);
\draw [dashed,line width=0.2mm, name path=GB] (G) -- (B);

\path [name intersections={of=AC and BD,by=Bp}];
\path [name intersections={of=BD and CE,by=Cp}];
\path [name intersections={of=CE and DF,by=Dp}];
\path [name intersections={of=DF and EG,by=Ep}];
\path [name intersections={of=EG and FA,by=Fp}];
\path [name intersections={of=FA and GB,by=Gp}];
\path [name intersections={of=GB and AC,by=Ap}];

\draw  [line width=0.8mm]  (Ap) -- (Bp) -- (Cp) -- (Dp) -- (Ep) -- (Fp) -- (Gp) -- cycle;

\node at (-0.9,2.3) () {$P$};
\node at (2,1.5) () {$P'$};

\end{tikzpicture}
\caption{The pentagram map.}
\end{figure}

Since this construction is projectively invariant, one usually regards the pentagram map as a dynamical system on the space of polygons in $\P^2$ modulo projective equivalence. (Here $\P^2$ denotes the real or complex projective plane. More generally, one can consider polygons in the projective plane over any field.) 
 The pentagram map also naturally extends to a bigger space of so-called \textit{twisted polygons}. A bi-infinite sequence of points $v_i \in \P^2$ is called a \textit{twisted $n$-gon} if $v_{i+n} = M(v_i)$ for every $i \in \Z$ and a fixed projective transformation $M$, called the \textit{monodromy}. The case~$M = \Id$ corresponds to closed polygons. The pentagram map is well-defined on the space of projective equivalence classes of twisted polygons and preserves the conjugacy class of the monodromy. \par

From the beginning there was a strong indication that the pentagram map is integrable. In \cite{schwartz2001pentagram} Schwartz established a first result in this direction proving that the pentagram map is recurrent. Further, in \cite{schwartz2008discrete} he constructed two sequences $E_k, O_k$ 
of so-called \textit{monodromy invariants} preserved by the pentagram map. 
The functions $E_k, O_k$ are, roughly speaking, weighted homogeneous components of spectral invariants of the monodromy matrix. Remarkably, this construction, essentially based on the notion of a twisted polygon, provides invariants for the pentagram map in both twisted and closed cases.\par

V.\,Ovsienko, R.\,Schwartz, and S.\,Tabachnikov \cite{ovsienko2010pentagram} proved that the pentagram map on twisted polygons has an invariant Poisson bracket, and that the monodromy invariants Poisson commute, thus establishing Arnold-Liouville integrability in the twisted case. F.\,Soloviev~\cite{soloviev2013integrability} showed that the pentagram map is algebraically integrable, both in twisted and closed cases. 
An alternative proof of integrability in the closed case can be found in \cite{ovsienko2013liouville}.\par

We also mention, in random order, several works which generalize the pentagram map and explore its relations to other subjects. M.\,Glick \cite{GLICK20111019} interpreted the pentagram map in terms of cluster algebras. M.\,Gekhtman, M.\,Shapiro, S.\,Tabachnikov, and  A.\,Vainshtein \cite{Gekhtman2016} generalized Glick's work by including the pentagram map into a family of discrete integrable systems related to weighted directed networks. In this family one finds a discrete version of the relativistic Toda lattice, as well as certain multidimensional generalizations of the pentagram map defined on so-called \textit{corrugated} polygons. Other integrable generalizations of the pentagram map were studied by B.\,Khesin and F.\,Soloviev~\cite{khesin2013, khesin2016}, G.\,Mar{\'\i} Beffa~\cite{beffa2015}, and R.\,Felipe with G.\,Mar{\'\i} Beffa~\cite{felipe2015}. Some of these maps have been recently put in the context of cluster algebra by M.\,Glick and P.\,Pylyavskyy \cite{glick2015}. We finally mention the work of V.\,Fock and A.\,Marshakov \cite{fock2014loop}, which, in particular, relates the pentagram map to Poisson-Lie groups, and the paper \cite{kedem2015t} by R.\,Kedem and P.\,Vichitkunakorn, which interprets the pentagram map in terms of T-systems.\par
\smallskip

In the present paper, we study the interaction of the pentagram map with polygons inscribed in conic sections. Schwartz and Tabachnikov \cite{schwartz2011pentagram} proved that the restrictions of the monodromy invariants to inscribed polygons satisfy the following identities.

\begin{theorem}\label{thm1} 
For polygons inscribed in a nondegenerate conic, one has $E_k = O_k$ for every $k$.
\end{theorem}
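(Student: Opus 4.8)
The plan is to realize the monodromy invariants as graded pieces of the two conjugacy invariants $\tr M$ and $\tr M^{-1}$ of the monodromy $M \in SL_3(\K)$, and to show that for inscribed polygons $M$ is \emph{self-dual}, i.e.\ conjugate to $(M^{-1})^T$, in a manner compatible with the grading.

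First I would lift the twisted polygon $(v_i)$ to vectors $V_i \in \K^3$ normalized by $\det(V_i, V_{i+1}, V_{i+2}) = 1$, so that the lift obeys a third-order linear recurrence $V_{i+3} = a_i V_{i+2} + b_i V_{i+1} + V_i$ with $n$-periodic coefficients $a_i, b_i$. The monodromy $M$ of this recurrence lies in $SL_3(\K)$, its characteristic polynomial is $\lambda^3 - (\tr M)\lambda^2 + (\tr M^{-1})\lambda - 1$, and by construction the invariants $E_k, O_k$ are the weighted-homogeneous components of $\tr M$ and $\tr M^{-1}$ with respect to the standard $\K^*$-scaling $R_t$ on the space of twisted polygons (the scaling under which the pentagram map is equivariant). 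Writing $M(t)$ for the monodromy of the rescaled polygon $R_t(P)$, the identity $E_k = O_k$ for all $k$ is then equivalent to $\tr M(t) = \tr M(t)^{-1}$ as functions of $t$.

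The core observation is that for an inscribed polygon the monodromy preserves the conic. Let $Q$ be the nondegenerate symmetric bilinear form cutting out the conic $\mathcal C$, so that $Q(V_i, V_i) = 0$ for all $i$. Because at least five vertices lie on $\mathcal C$ and a nondegenerate conic is determined by five points in general position, the projective map induced by $M$ sends $\mathcal C$ to a conic through five points of $\mathcal C$, whence $M(\mathcal C) = \mathcal C$. Normalizing the lift so that $M^T Q M = Q$ exactly, one gets $(M^{-1})^T = Q M Q^{-1}$, so $M$ is conjugate to $(M^{-1})^T$ and in particular $\tr M = \tr M^{-1}$. Conceptually this is a self-duality: passing to the projectively dual twisted polygon replaces $M$ by its inverse transpose and thereby interchanges the two generating functions $\tr M$ and $\tr M^{-1}$, while the polarity with respect to $\mathcal C$ identifies the dual monodromy with the original one.

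The main obstacle, and the only substantial step, is to upgrade this to the graded identity, i.e.\ to promote $M \sim (M^{-1})^T$ to a relation between the full $t$-families $M(t)$ and $(M(t)^{-1})^T$. The scaling $R_t$ does not preserve inscribedness naively — the coefficient swap induced by passing to the adjoint recurrence does not commute with $R_t$ on the nose — so I expect the self-duality to take the form $M(t) \sim (M(\epsilon/t)^{-1})^T$ for a suitable constant $\epsilon$, with the scaling parameter inverted. I would establish this by tracking the conic $Q$ through the adjoint of the rescaled recurrence: the relations $Q(V_i, V_i) = 0$ together with the $n$-periodicity should intertwine the rescaled monodromy with the inverse transpose of the oppositely-rescaled one. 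Granting this, one obtains $\tr M(t) = \tr M(\epsilon/t)^{-1}$ identically in $t$; comparing coefficients of each power of $t$, and using that the scaling weights of $E_k$ and $O_k$ are exchanged under $t \mapsto \epsilon/t$, yields $E_k = O_k$ for every $k$. The crux is thus a careful bookkeeping of how the polarity interacts with the grading, confirming that the duality-induced exchange of generating functions collapses to the term-by-term identity on inscribed polygons.
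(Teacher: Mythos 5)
Your overall strategy coincides with the paper's: encode $E_k$ and $O_k$ in the characteristic polynomial of the scaled monodromy $M(z)$, observe that the monodromy of an inscribed polygon preserves the conic (so the unscaled monodromy is conjugate into the orthogonal group), and upgrade this to the graded self-duality $M(z) \sim (M(z^{-1})^{-1})^t$, which forces the spectral data to be symmetric under $(z,w) \leftrightarrow (z^{-1},w^{-1})$ and hence $E_k = O_k$. You even correctly anticipate that the scaling parameter must be inverted on the dual side. (One local slip: your intermediate claim that $E_k = O_k$ for all $k$ is equivalent to $\tr\, M(t) = \tr\, M(t)^{-1}$ at the \emph{same} $t$ is incorrect, since $O_k$ carries weight $z^{k}$ and $E_k$ weight $z^{-k}$ under the rescaling, so the comparison necessarily involves $t \mapsto t^{-1}$; your later formulation $\tr\, M(t) = \tr\, M(\epsilon/t)^{-1}$ fixes this.)

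However, the proposal has a genuine gap at exactly what you call ``the only substantial step,'' and the route you sketch for closing it would not go through as described. The difficulty is that the rescaling $R_z$ does not preserve inscribedness: for $z \neq 1$ the polygon $R_z(P)$ lies on no conic, so there is no quadratic form $Q_z$ with $M(z)^t Q_z M(z) = Q_z$, and ``tracking the conic $Q$ through the adjoint of the rescaled recurrence'' has no evident meaning --- the single relation $Q(V_i,V_i)=0$ does not visibly interact with the $z$-dependent Lax matrices. What actually makes the argument work in the paper is a change of coordinates adapted to the conic: parametrize $C \simeq \P^1$, set $p_i := 1 - [v_{i-2},v_{i-1},v_i,v_{i+1}]$ using the cross-ratio \emph{on the conic}, and substitute $x_i = (1-p_i)/p_{i+1}$, $y_i = (1-p_{i+1})/p_i$ into each Lax matrix $L_j(z)$. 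After a diagonal gauge one finds a \emph{local} twisted self-duality $(L'_j(z^{-1})^{-1})^t = S(z)^{-1} L'_j(z) S(z)$ with an intertwiner $S(z)$ that is independent of the vertex index $j$ --- this vertex-independence is what lets the relation telescope to the full product $M'_i(z)$ --- and finally the factorization $S(z) = T(z)T(z^{-1})^t$ converts the twisted relation into the exact self-duality $M(z) = (M(z^{-1})^{-1})^t$. The existence of a constant intertwiner and its $z \leftrightarrow z^{-1}$-compatible factorization are precisely the missing ideas; nothing in your sketch produces them, and without them the pointwise statement at $z=1$ (orthogonality of the monodromy) does not propagate to the whole family.
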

The proof of Schwartz and Tabachnikov is rather hard and is based on combinatorial analysis of explicit formulas for the monodromy invariants. Our first result is a new proof of Theorem \ref{thm1}. Our argument does not rely on explicit formulas but employs the definition of~$E_k, O_k$ in terms of the spectrum of the monodromy. Namely, we show that, up to conjugation, the monodromy matrix for inscribed polygons satisfies a \textit{self-duality relation}
\begin{align}\label{monodromyRelation}
M(z)^{-1} = M(z^{-1})^{t},
\end{align}
where $z$ is the spectral parameter. 
As a corollary, the spectral curve, defined, roughly speaking, as the zero locus of the characteristic polynomial $  \det(M(z) - w\Id)$, 
is invariant under the involution $\sigma \colon (z,w) \leftrightarrow (z^{-1}, w^{-1}),$ which implies $E_k = O_k$ for every~$k$.
\begin{remark}
Note that Theorem \ref{thm1} is, in general, not true for polygons inscribed in degenerate conics. Although any degenerate conic $C$ can be approximated by nondegenerate conics $C_\varepsilon \to C$, a twisted polygon inscribed in $C$ cannot be, in general, approximated by twisted polygons inscribed in $C_\varepsilon$. For this reason, one cannot apply a limiting argument to conclude that $E_k = O_k$ in the degenerate case, and, in fact, there are twisted polygons inscribed in degenerate conics with $E_k \neq O_k$. Nevertheless, for closed polygons, Theorem~\ref{thm1} is true in both nondegenerate and degenerate cases.

\end{remark}

We also obtain a geometric characterization of inscribed polygons with fixed values of the monodromy invariants and describe the behavior of this set under the pentagram map. For simplicity, consider the case of $n$-gons with odd $n$. In this case, the variety of polygons with fixed generic monodromy invariants $E_k, O_k$ is identified with an open dense subset in the Jacobian of the spectral curve, while the pentagram map is a shift relative to the group structure on the Jacobian~\cite{soloviev2013integrability}. Consider a level set of the monodromy invariants which contains at least one inscribed polygon. In this case, there is an involution $\sigma \colon (z,w) \leftrightarrow (z^{-1}, w^{-1})$ on the spectral curve $X$, defining a double covering $\pi \colon X \to Y := X / \sigma$. Now recall that, given a ramified (non-\'etale) double covering of curves $\pi \colon X \to Y$, one can decompose the Jacobian of $X$ in a sum of Abelian subvarieties of complementary dimensions:
\begin{align*}
\mathrm{Jac}(X) = \mathrm{Jac}(Y) + \mathrm{Prym}(X \vert Y).
\end{align*}
Here $\mathrm{Jac}(Y)$ is the Jacobian of the base curve $Y$ embedded in $\mathrm{Jac}(X)$ by means of the pullback homomorphism $\pi^*$, while $\mathrm{Prym}(X \vert Y)$ is the \textit{Prym variety} of $X$ over $Y$, which can be defined as the kernel of the pushforward (or \textit{norm}) homomorphism $\pi_*$.
The intersection $\mathrm{Jac}(Y) \cap \mathrm{Prym}(X \vert Y)$ is the finite set of order~$2$ points in $\mathrm{Jac}(Y)$.\par

 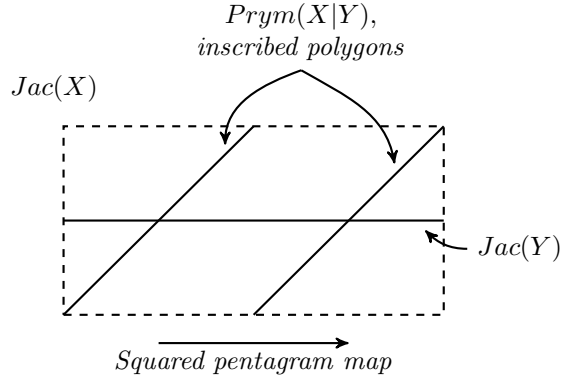
\begin{figure}[t]
\centerline{
\qquad
\begin{tikzpicture}[thick, scale = 1.25, every text node part/.style={align=center},]
\draw [dashed] (0,0) -- (0,2)-- (4,2) -- (4,0) -- cycle;
\draw [](0,1) -- (4,1);
\draw [](0,0) -- (2,2);
\draw [](2,0) -- (4,2);
  \node[text width=5cm] at (2.5,3) (A) { ${Prym}(X\vert Y)$, \\ \textit{inscribed polygons}}; 
  \node at (1.7,1.7) (B) {};
 \draw [->] (A.south) to [out = 200, in=90] (B);
  \node at (3.5,1.5) (C) {};
  \draw [->] (A.south) to [out = 330, in=100] (C);
   \node at (4.8, 0.7) (D) { ${Jac}(Y)$}; 
   \node at (3.7, 1) (E) {};
    \draw [->] (D) to [out = 180, in=315] (E);
    \draw [->] (1,-0.3) -- (3,-0.3);
       \node at (2,-0.5) () { \textit{Squared pentagram map}}; 
          \node at (-0.1, 2.4) () { ${Jac}(X)$}; 
\end{tikzpicture}
}
\caption{The pentagram map and inscribed polygons.}\label{pmip}
\end{figure}
In the following theorem, we consider twisted polygons in the complex projective plane.
\begin{theorem}\label{thm2}
\begin{enumerate}
\item
The space of twisted inscribed $(2q+1)$-gons with fixed generic monodromy invariants $E_k = O_k$ is an open dense subset in a subtorus of $\mathrm{Jac}(X)$ parallel\footnote{In fact, since the identification between the level set of the monodromy invariants and $\mathrm{Jac}(X)$ is only defined up to a shift, one can assume that this subtorus is exactly $\mathrm{Prym}(X \vert Y)$.} to the Prym variety~$\mathrm{Prym}(X \vert Y)$.
\item The square of the pentagram map is, up to a relabeling of vertices, a translation along the complementary Abelian subvariety $ \mathrm{Jac}(Y) $.
\end{enumerate}
\end{theorem}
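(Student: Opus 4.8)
The plan is to combine Soloviev's algebro-geometric description of the level sets \cite{soloviev2013integrability} with the self-duality relation \eqref{monodromyRelation}. Since $n = 2q+1$ is odd, Soloviev's theorem identifies a generic level set of the monodromy invariants with an open dense subset of $\mathrm{Jac}(X)$, the identification sending a polygon to the class $[L]$ of its eigenvector line bundle (the bundle whose fiber over a generic $(z,w) \in X$ is the eigenline of $M(z)$ with eigenvalue $w$); under this identification the pentagram map $T$ and the index shift $S\colon v_i \mapsto v_{i+1}$ become translations, by vectors $v_T$ and $v_S$ respectively. I will work on a level set with $E_k = O_k$, so that $X$ carries the involution $\sigma$, and I will use the induced decomposition of $\mathrm{Jac}(X)$ into the $(\pm 1)$-eigenspaces $A_{\pm}$ of $\sigma^*$, where $A_+ = \mathrm{Jac}(Y)$ and $A_- = \mathrm{Prym}(X\vert Y)$. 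Because $\pi$ is ramified, $\pi^*$ is injective, so $\ker(1 + \sigma^*) = \ker(\pi_*) = \mathrm{Prym}(X\vert Y)$ and this group is connected.

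For part (1), the key step is to convert the matrix identity \eqref{monodromyRelation} into a relation on $\mathrm{Jac}(X)$. Writing $\psi(z)$ for the right eigenvector of $M(z)$ at $p = (z,w) \in X$, the relation $M(z^{-1})^{t} = M(z)^{-1}$ shows that $\psi$ at $p$ agrees, up to scale, with the transpose of the left eigenvector of $M(z^{-1})$ at $\sigma(p) = (z^{-1}, w^{-1})$. Since the left eigenvector bundle is dual to $L$ (their natural pairing is a meromorphic function with a fixed divisor), this identifies $\sigma^* L$ with $L^{-1}$ up to a fixed twist, that is, $(1 + \sigma^*)[L] = c$ for a constant $c$ independent of the polygon. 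Conversely, spectral data obeying this relation reconstruct an inscribed polygon. Hence the inscribed polygons form exactly the solution set of $(1 + \sigma^*)[L] = c$, a single coset of $\ker(1 + \sigma^*) = \mathrm{Prym}(X\vert Y)$; intersecting with the open dense Jacobian chart yields part (1). A dimension count, matching the dimension of the space of inscribed twisted $n$-gons with $\dim \mathrm{Prym}(X\vert Y)$, then confirms that the inscribed locus fills this coset rather than a proper subvariety.

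For part (2), I will use that projective duality $D$ (replacing the sequence of vertices by the sequence of edge lines) conjugates the pentagram map to its inverse: $D\, T\, D^{-1} = T^{-1} S^{a}$ for some integer $a$. Because duality replaces the monodromy by its inverse transpose and reverses orientation, it acts on $\mathrm{Jac}(X)$ as an affine involution with linear part $-\sigma^*$. Conjugating the translation $t_{v_T}$ by such a map gives $t_{-\sigma^* v_T}$, and comparing with $T^{-1} S^{a} = t_{-v_T + a v_S}$ yields $\sigma^* v_T = v_T - a v_S$; splitting into $\sigma^*$-eigencomponents, the anti-invariant parts satisfy $2(v_T)_- = a\,(v_S)_-$. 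Consequently the translation vector of $T^2 S^{-a} = t_{2 v_T - a v_S}$ has vanishing $A_-$-component, so $T^2$, up to the relabeling $S^{-a}$, is a translation along $A_+ = \mathrm{Jac}(Y)$. Alternatively one can bypass $D$ and compute $\sigma^* v_T$ and $\sigma^* v_S$ directly from Soloviev's description of these vectors as Abel--Jacobi images of divisors supported over $z = 0$ and $z = \infty$, using that $\sigma$ interchanges these two fibers.

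I expect the main obstacle to be the line-bundle bookkeeping in part (1): passing rigorously from \eqref{monodromyRelation} to the identity $(1 + \sigma^*)[L] = c$ requires controlling how the eigenvector and its dual transform under $\sigma$, including the exact twisting divisor and the constant $c$, and then establishing the converse so that the inclusion of the inscribed locus into the Prym coset is an equality. In part (2) the analogous subtlety is pinning down the shift exponent $a$ and the sign of the duality action on $\mathrm{Jac}(X)$; here the argument is robust, since only the anti-invariant components enter the final conclusion and the precise value of $a$ is irrelevant.
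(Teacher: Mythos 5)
The paper does not actually contain a proof of Theorem \ref{thm2}: it says only that the proof ``is based on relation \eqref{monodromyRelation}'' and that details will be published elsewhere. So there is no in-paper argument to compare against; judged on its own terms, your proposal follows precisely the announced strategy (Soloviev's identification of the level set with $\mathrm{Jac}(X)$ plus the self-duality relation \eqref{sdRelation}), and its skeleton is sound. For part 1, the derivation of $(1+\sigma^*)[L]=c$ from \eqref{monodromyRelation} via the duality between left and right eigenvector bundles is the standard mechanism by which symmetric Lax matrices produce Prym varieties, and your use of ramifiedness of $\pi$ to get connectedness of $\ker(1+\sigma^*)$ is correct. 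For part 2, the computation $D\,t_{v_T}\,D^{-1}=t_{-\sigma^* v_T}$, combined with $D\,T\,D^{-1}=T^{-1}S^a$, does force the anti-invariant component of $2v_T-av_S$ to vanish (note that $a v_S=(1-\sigma^*)v_T$ is automatically anti-invariant, which tidies your eigencomponent bookkeeping), and it is a virtue of this argument that the value of $a$ drops out. The paper's remark after Lemma \ref{lemma2}, that $(M(z^{-1})^{-1})^t$ is the scaled monodromy of the dual polygon, supports your claim that $D$ acts affinely on $\mathrm{Jac}(X)$ with linear part $-\sigma^*$; the identity $D\,T\,D^{-1}=T^{-1}S^a$ itself goes back to Schwartz's presentation of $T$ as a composition of two duality maps, but you would need to verify it, or fall back on your alternative computation of $\sigma^* v_T$ from the divisors over $z=0,\infty$.

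The one genuine soft spot is the converse step in part 1: ``spectral data obeying this relation reconstruct an inscribed polygon'' is not mere bookkeeping. Knowing $(1+\sigma^*)[L]=c$ only recovers a monodromy conjugate to one satisfying \eqref{sdRelation}, and even the pointwise consequence that $M(1)$ is orthogonal (i.e., preserves a conic) does not by itself make the polygon inscribed in that conic --- preservation of $C$ by the monodromy is far weaker than $v_i\in C$. A rigorous converse must exploit the full $z$-dependent relation and genericity. Fortunately, your own dimension-count fallback closes this gap without the converse: the inscribed polygons in a fixed generic level set form a constructible subset of dimension $q$ (this follows from the fibration description in the paper's remark after Theorem \ref{thm2}) sitting inside an irreducible $q$-dimensional coset of $\mathrm{Prym}(X\vert Y)$, and a constructible subset of full dimension in an irreducible variety contains a dense open subset of it. You should restructure the writeup so that the containment-plus-dimension argument carries the load, with the reconstruction statement demoted to an optional strengthening; as written, the proposal leans on the harder, unproved direction.
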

See Figure \ref{pmip} (the opposite sides of the rectangle are glued together to make a torus).
\begin{remark}
Let us comment on the dimensions of the sets mentioned in Theorem \ref{thm2}. For twisted~$(2q+1)$-gons, the number of monodromy invariants is $2(q+1)$. The restriction $E_k = O_k$ gives a $(q+1)$-dimensional space of possible values of invariants. Theorem \ref{thm2} says that the variety of inscribed polygons modulo projective transformations is a fibration over this $(q+1)$-dimensional space with generic fiber $\mathrm{Prym}(X \vert Y)$. The genus of the spectral curve $X$ is generically $2q$. The involution~$(z,w) \leftrightarrow (z^{-1}, w^{-1})$ on $X$ has two fixed points, therefore, by the Riemann-Hurwitz formula, the genus of $Y$ is $q$. So, $\dim \mathrm{Jac}(Y) = q$, and $\dim \mathrm{Prym}(X \vert Y) = \dim \mathrm{Jac}(X) - \dim \mathrm{Jac}(Y) = q$. \par We conclude that the moduli space of inscribed twisted $(2q+1)$-gons is a fibration over a $(q+1)$-dimensional base with $q$-dimensional fibers, and thus has total dimension $2q+1$. Note that the latter dimension can also be computed by identifying inscribed polygons with polygons in $\P^1$.
\end{remark}
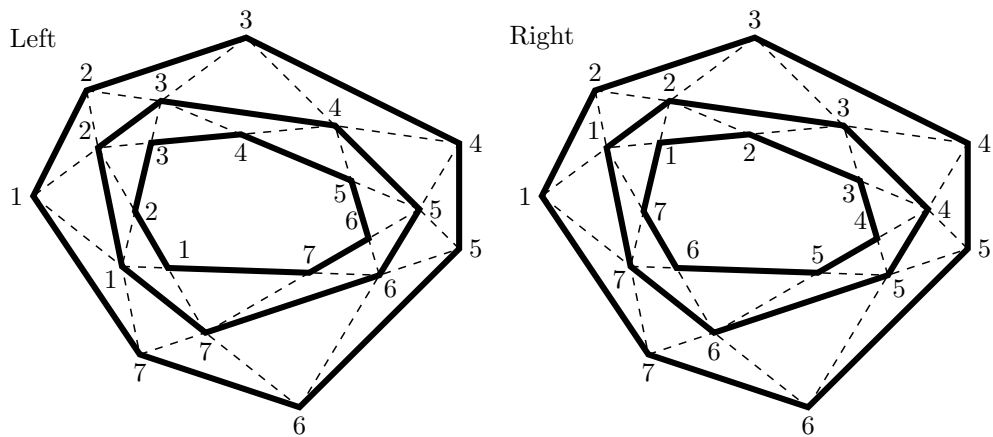
\begin{figure}[b]
\centering
\begin{tikzpicture}[thick, scale = 1.4]
\node at (-1,3) () {Left};
\coordinate[label={below:$7$}] (A) at (0,0);
\coordinate [label={below:$6$}](B) at (1.5,-0.5);
\coordinate [label={right:$5$}](C) at (3,1);
\coordinate [label={right:$4$}](D) at (3,2);
\coordinate [label={$3$}](E) at (1,3);
\coordinate [label={$2$}](F) at (-0.5,2.5);
\coordinate [label={left:$1$}](G) at (-1,1.5);

\draw  [line width=0.8mm]  (A) -- (B) -- (C) -- (D) -- (E) -- (F) -- (G) -- cycle;
\draw [dashed, line width=0.2mm, name path=AC] (A) -- (C);
\draw [dashed,line width=0.2mm, name path=BD] (B) -- (D);
\draw [dashed,line width=0.2mm, name path=CE] (C) -- (E);
\draw [dashed,line width=0.2mm, name path=DF] (D) -- (F);
\draw [dashed,line width=0.2mm, name path=EG] (E) -- (G);
\draw [dashed,line width=0.2mm, name path=FA] (F) -- (A);
\draw [dashed,line width=0.2mm, name path=GB] (G) -- (B);

\path [name intersections={of=AC and BD,by=Bp}];
\path [name intersections={of=BD and CE,by=Cp}];
\path [name intersections={of=CE and DF,by=Dp}];
\path [name intersections={of=DF and EG,by=Ep}];
\path [name intersections={of=EG and FA,by=Fp}];
\path [name intersections={of=FA and GB,by=Gp}];
\path [name intersections={of=GB and AC,by=Ap}];

\draw [dashed, line width=0.2mm, name path=ACp] (Ap) -- (Cp);
\draw [dashed,line width=0.2mm, name path=BDp] (Bp) -- (Dp);
\draw [dashed,line width=0.2mm, name path=CEp] (Cp) -- (Ep);
\draw [dashed,line width=0.2mm, name path=DFp] (Dp) -- (Fp);
\draw [dashed,line width=0.2mm, name path=EGp] (Ep) -- (Gp);
\draw [dashed,line width=0.2mm, name path=FAp] (Fp) -- (Ap);
\draw [dashed,line width=0.2mm, name path=GBp] (Gp) -- (Bp);

\path [name intersections={of=ACp and BDp,by=Bpp}];
\path [name intersections={of=BDp and CEp,by=Cpp}];
\path [name intersections={of=CEp and DFp,by=Dpp}];
\path [name intersections={of=DFp and EGp,by=Epp}];
\path [name intersections={of=EGp and FAp,by=Fpp}];
\path [name intersections={of=FAp and GBp,by=Gpp}];
\path [name intersections={of=GBp and ACp,by=App}];

\coordinate[label={below:$7$}] (appp) at (Ap);
\coordinate [label={[label distance = -0.1cm]-45:$6$}](bppp) at (Bp);
\coordinate [label={right:$5$}](cppp) at (Cp);
\coordinate [label={$4$}](dppp) at (Dp);
\coordinate [label=$3$](eppp) at (Ep);
\coordinate [label={[label distance = -0.05cm]170:{$2$}}](fppp) at (Fp);
\coordinate [label={[label distance = -0.1cm]225:{$1$}}](gppp) at (Gp);

\coordinate[label={$7$}] (appp) at (App);
\coordinate [label={175:$6$}](bppp) at (Bpp);
\coordinate [label={[label distance = -0.1cm]-135:$5$}](cppp) at (Cpp);
\coordinate [label={below:$4$}](dppp) at (Dpp);
\coordinate [label={[label distance = -0.1cm]-45:$3$}](eppp) at (Epp);
\coordinate [label={right:{$2$}}](fppp) at (Fpp);
\coordinate [label={45:$1$}](gppp) at (Gpp);

\draw  [line width=0.8mm]  (Ap) -- (Bp) -- (Cp) -- (Dp) -- (Ep) -- (Fp) -- (Gp) -- cycle;
\draw  [line width=0.8mm]  (App) -- (Bpp) -- (Cpp) -- (Dpp) -- (Epp) -- (Fpp) -- (Gpp) -- cycle;

\end{tikzpicture}
\begin{tikzpicture}[thick, scale = 1.4]
\node at (-1,3) () {Right};
\coordinate[label={below:$7$}] (A) at (0,0);
\coordinate [label={below:$6$}](B) at (1.5,-0.5);
\coordinate [label={right:$5$}](C) at (3,1);
\coordinate [label={right:$4$}](D) at (3,2);
\coordinate [label={$3$}](E) at (1,3);
\coordinate [label={$2$}](F) at (-0.5,2.5);
\coordinate [label={left:$1$}](G) at (-1,1.5);

\draw  [line width=0.8mm]  (A) -- (B) -- (C) -- (D) -- (E) -- (F) -- (G) -- cycle;
\draw [dashed, line width=0.2mm, name path=AC] (A) -- (C);
\draw [dashed,line width=0.2mm, name path=BD] (B) -- (D);
\draw [dashed,line width=0.2mm, name path=CE] (C) -- (E);
\draw [dashed,line width=0.2mm, name path=DF] (D) -- (F);
\draw [dashed,line width=0.2mm, name path=EG] (E) -- (G);
\draw [dashed,line width=0.2mm, name path=FA] (F) -- (A);
\draw [dashed,line width=0.2mm, name path=GB] (G) -- (B);

\path [name intersections={of=AC and BD,by=Bp}];
\path [name intersections={of=BD and CE,by=Cp}];
\path [name intersections={of=CE and DF,by=Dp}];
\path [name intersections={of=DF and EG,by=Ep}];
\path [name intersections={of=EG and FA,by=Fp}];
\path [name intersections={of=FA and GB,by=Gp}];
\path [name intersections={of=GB and AC,by=Ap}];

\draw [dashed, line width=0.2mm, name path=ACp] (Ap) -- (Cp);
\draw [dashed,line width=0.2mm, name path=BDp] (Bp) -- (Dp);
\draw [dashed,line width=0.2mm, name path=CEp] (Cp) -- (Ep);
\draw [dashed,line width=0.2mm, name path=DFp] (Dp) -- (Fp);
\draw [dashed,line width=0.2mm, name path=EGp] (Ep) -- (Gp);
\draw [dashed,line width=0.2mm, name path=FAp] (Fp) -- (Ap);
\draw [dashed,line width=0.2mm, name path=GBp] (Gp) -- (Bp);

\path [name intersections={of=ACp and BDp,by=Bpp}];
\path [name intersections={of=BDp and CEp,by=Cpp}];
\path [name intersections={of=CEp and DFp,by=Dpp}];
\path [name intersections={of=DFp and EGp,by=Epp}];
\path [name intersections={of=EGp and FAp,by=Fpp}];
\path [name intersections={of=FAp and GBp,by=Gpp}];
\path [name intersections={of=GBp and ACp,by=App}];

\coordinate[label={below:$6$}] (appp) at (Ap);
\coordinate [label={[label distance = -0.1cm]-45:$5$}](bppp) at (Bp);
\coordinate [label={right:$4$}](cppp) at (Cp);
\coordinate [label={$3$}](dppp) at (Dp);
\coordinate [label=$2$](eppp) at (Ep);
\coordinate [label={[label distance = -0.05cm]170:{$1$}}](fppp) at (Fp);
\coordinate [label={[label distance = -0.1cm]225:{$7$}}](gppp) at (Gp);

\coordinate[label={$5$}] (appp) at (App);
\coordinate [label={175:$4$}](bppp) at (Bpp);
\coordinate [label={[label distance = -0.1cm]-135:$3$}](cppp) at (Cpp);
\coordinate [label={below:$2$}](dppp) at (Dpp);
\coordinate [label={[label distance = -0.1cm]-45:$1$}](eppp) at (Epp);
\coordinate [label={right:{$7$}}](fppp) at (Fpp);
\coordinate [label={45:$6$}](gppp) at (Gpp);

\draw  [line width=0.8mm]  (Ap) -- (Bp) -- (Cp) -- (Dp) -- (Ep) -- (Fp) -- (Gp) -- cycle;
\draw  [line width=0.8mm]  (App) -- (Bpp) -- (Cpp) -- (Dpp) -- (Epp) -- (Fpp) -- (Gpp) -- cycle;

\end{tikzpicture}

\caption{Left and right labeling schemes.}\label{LR}
\end{figure}

\begin{remark}
The reason why in part 2 of Theorem \ref{thm2} one needs to relabel the vertices to identify the squared pentagram map with a translation along $ \mathrm{Jac}(Y) $ is the following. There are two equally natural ways (the so-called \textit{left and right labeling schemes}) how to define the pentagram map on polygons with labeled vertices. Figure \ref{LR} depicts these two labeling schemes and the squares of the corresponding maps. Notice that in both cases, the squared map exhibits a shift of indices. At the same time, the translation along~$ \mathrm{Jac}(Y) $ corresponds to the ``canonical squared pentagram map'' depicted in Figure \ref{SQ} (the latter is, in fact, not a square of any map).

\end{remark}

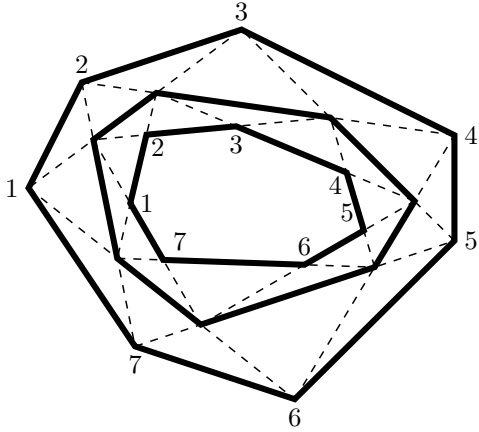
\begin{figure}[t]
\centering
\begin{tikzpicture}[thick, scale = 1.4]
\coordinate[label={below:$7$}] (A) at (0,0);
\coordinate [label={below:$6$}](B) at (1.5,-0.5);
\coordinate [label={right:$5$}](C) at (3,1);
\coordinate [label={right:$4$}](D) at (3,2);
\coordinate [label={$3$}](E) at (1,3);
\coordinate [label={$2$}](F) at (-0.5,2.5);
\coordinate [label={left:$1$}](G) at (-1,1.5);

\draw  [line width=0.8mm]  (A) -- (B) -- (C) -- (D) -- (E) -- (F) -- (G) -- cycle;
\draw [dashed, line width=0.2mm, name path=AC] (A) -- (C);
\draw [dashed,line width=0.2mm, name path=BD] (B) -- (D);
\draw [dashed,line width=0.2mm, name path=CE] (C) -- (E);
\draw [dashed,line width=0.2mm, name path=DF] (D) -- (F);
\draw [dashed,line width=0.2mm, name path=EG] (E) -- (G);
\draw [dashed,line width=0.2mm, name path=FA] (F) -- (A);
\draw [dashed,line width=0.2mm, name path=GB] (G) -- (B);

\path [name intersections={of=AC and BD,by=Bp}];
\path [name intersections={of=BD and CE,by=Cp}];
\path [name intersections={of=CE and DF,by=Dp}];
\path [name intersections={of=DF and EG,by=Ep}];
\path [name intersections={of=EG and FA,by=Fp}];
\path [name intersections={of=FA and GB,by=Gp}];
\path [name intersections={of=GB and AC,by=Ap}];

\draw [dashed, line width=0.2mm, name path=ACp] (Ap) -- (Cp);
\draw [dashed,line width=0.2mm, name path=BDp] (Bp) -- (Dp);
\draw [dashed,line width=0.2mm, name path=CEp] (Cp) -- (Ep);
\draw [dashed,line width=0.2mm, name path=DFp] (Dp) -- (Fp);
\draw [dashed,line width=0.2mm, name path=EGp] (Ep) -- (Gp);
\draw [dashed,line width=0.2mm, name path=FAp] (Fp) -- (Ap);
\draw [dashed,line width=0.2mm, name path=GBp] (Gp) -- (Bp);

\path [name intersections={of=ACp and BDp,by=Bpp}];
\path [name intersections={of=BDp and CEp,by=Cpp}];
\path [name intersections={of=CEp and DFp,by=Dpp}];
\path [name intersections={of=DFp and EGp,by=Epp}];
\path [name intersections={of=EGp and FAp,by=Fpp}];
\path [name intersections={of=FAp and GBp,by=Gpp}];
\path [name intersections={of=GBp and ACp,by=App}];
\coordinate[label={$6$}] (appp) at (App);
\coordinate [label={175:$5$}](bppp) at (Bpp);
\coordinate [label={[label distance = -0.1cm]-135:$4$}](cppp) at (Cpp);
\coordinate [label={below:$3$}](dppp) at (Dpp);
\coordinate [label={[label distance = -0.1cm]-45:$2$}](eppp) at (Epp);
\coordinate [label={right:{$1$}}](fppp) at (Fpp);
\coordinate [label={45:$7$}](gppp) at (Gpp);

\draw  [line width=0.8mm]  (Ap) -- (Bp) -- (Cp) -- (Dp) -- (Ep) -- (Fp) -- (Gp) -- cycle;
\draw  [line width=0.8mm]  (App) -- (Bpp) -- (Cpp) -- (Dpp) -- (Epp) -- (Fpp) -- (Gpp) -- cycle;

\end{tikzpicture}

\caption{The natural labeling for the squared pentagram map.}\label{SQ}
\end{figure}

The proof of Theorem \ref{thm2} is somewhat involved and is based on relation \eqref{monodromyRelation}. Details of the proof will be published elsewhere.
\par
The structure of the paper is as follows. In Section \ref{sec2}, we recall the derivation of the monodromy invariants. In Section \ref{sec25}, we show that the monodromy invariants arise as coefficients in the equation of the spectral curve. In Section \ref{sec3}, we derive a formula for the monodromy in terms of the corner invariants. Finally, in Section~\ref{sec4}, we use these results to prove Theorem \ref{thm1}. Section~\ref{secLast} is devoted to open questions. Additionally, the paper contains an appendix where we apply our technique to prove another conjecture of Schwartz and Tabachnikov on positivity of monodromy invariants for convex polygons.

To conclude this introduction, we also mention two works \cite{schwartz2010elementary, schwartz2015pentagram} both studying the pentagram map or its variations for polygons inscribed or circumscribed about a conic section. We find it an interesting problem to obtain an algebraic geometric explanation of these results in the spirit of the present paper.
\medskip

{\bf Acknowledgments.} The author is grateful to Michael Gekhtman, Boris Khesin, Gloria Mar{\'\i} Beffa, Richard Schwartz, and Sergei Tabachnikov for fruitful discussions. 

\bigskip

\section{Monodromy invariants}\label{sec2}
In this section, we briefly recall the derivation of first integrals for the pentagram map -- the monodromy invariants $E_k, O_k$. Details can be found in almost any paper on the subject, for instance in~\cite{ovsienko2010pentagram, schwartz2008discrete}. \par
Let $\mathcal P_n$ be the space of twisted $n$-gons modulo projective transformations. Recall that a \textit{twisted $n$-gon} is a bi-infinite sequence of points $v_i \in \P^2$ such that $v_{i+n} = M(v_i)$ for every $i \in \Z$ and a fixed projective transformation $M$, called the \textit{monodromy}. For an equivalence class of polygons $P \in \mathcal P_n$, the monodromy $M  \in \P \mathrm{GL}_3$ is well-defined up to conjugation (this means that, as a matrix, the monodromy is defined up to conjugation and scalar multiplication).\par
In what follows, we only consider polygons satisfying the following genericity assumption: any three consecutive vertices $v_i, v_{i+1}, v_{i+2}$ are not collinear. Note that for inscribed polygons this always holds.\par
Any twisted $n$-gon is uniquely determined by its $n$ consecutive vertices and its monodromy~$M \in \P\mathrm{GL}_3$. Therefore, the dimension of the space of twisted $n$-gons is $2n+8$, while the dimension of the moduli space $\mathcal P_n$ is $2n$.
As coordinates on the space~$\mathcal P_n$, one can take the so-called \textit{corner invariants}. To every vertex $v_i$ of a twisted $n$-gon, one associates two cross-ratios $x_i, y_i$, as shown in Figure \ref{CI}. We note that while there are several different ways how to define the cross-ratio, the one traditionally used in the definition of corner invariants is
$$
[t_1, t_2,t_3, t_4] := \frac{(t_1 - t_2)(t_3 - t_4)}{(t_1 - t_3)(t_2 - t_4)}.
$$

Clearly, the sequences $(x_i)$, $(y_i)$ are $n$-periodic and depend only on the projective equivalence class of the polygon $(v_i)$. Furthermore, $\{ x_1, y_1, \dots, x_n, y_n\}$ is a coordinate system on an open dense subset of $\mathcal P_n$ \cite{schwartz2008discrete}. \begin{figure}[b]
\centering
\begin{tikzpicture}[thick, scale = 1.4, rotate = -90]
\draw [line width=0.5mm] (0.9, -0.3) -- (0.7,0.4) -- (1,1) -- (2,1.5) -- (3,1) -- (3.3,0.4) -- (3.1,-0.3);
\fill (0.7,0.4) circle [radius=2pt];
\coordinate [label={$v_{i-2}$}]() at (0.7, 0.4);
\fill (1,1) circle [radius=2pt];
\coordinate [label={45:$v_{i-1}$}]() at (1,1);
\fill (2,1.5) circle [radius=2pt];
\coordinate [label={left:$v_{i}$}]() at (2,1.5);
\fill (3,1) circle [radius=2pt];
\coordinate [label={-45:$v_{i+1}$}]() at (3,1);
\fill (3.3,0.4) circle [radius=2pt];
\coordinate [label={below:$v_{i+2}$}]() at (3.3,0.4);
\fill (2,3) circle [radius=2pt];
\coordinate [label={right:$\bar v_{i}$}]() at (2,3);
\fill (1.4,1.8) circle [radius=2pt];
\coordinate [label={45:$\tilde v_{i}$}]() at (1.4,1.8);
\fill (2.6,1.8) circle [radius=2pt];
\coordinate [label={-45:$\hat v_{i}$}]() at (2.6,1.8);
\draw [dashed] (0.7, 0.4) -- (2, 3);
\draw [dashed] (1.4, 1.8) -- (3,1);
\draw [dashed] (2, 3) -- (3.3,0.4);
\draw [dashed]  (1,1) -- (2.6,1.8);
\node at (1.8,5) () {$x_i := [v_{i-2}, v_{i-1}, \tilde v_i, \bar v_i]$};
\node at (2.2,5) () {$y_i := [\bar v_i, \hat v_i, v_{i+1}, v_{i+2}]$};
\end{tikzpicture}
\caption{The definition of the corner invariants.}\label{CI}
\end{figure}
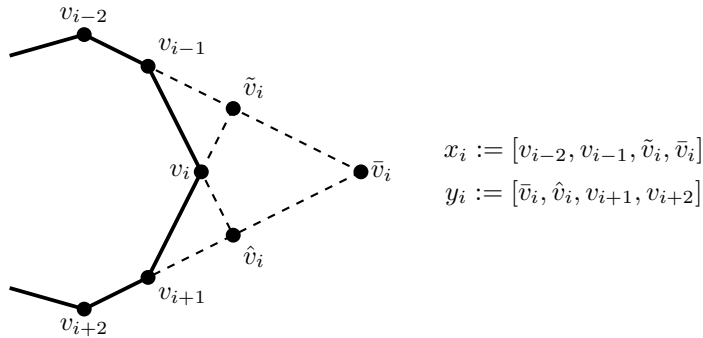

To define the pentagram map, we use the right labeling scheme (see Figure \ref{LR}). The image of a polygon~$(v_i)$ is the polygon~$(v_i')$ such that $v_i' = [v_i, v_{i+2}] \cap [v_{i-1}, v_{i+1}]$. This map descends to a densely defined map $T \colon \mathcal P_n \to \mathcal P_n$ (which we also call the pentagram map). In terms of the corner invariants $x_i, y_i$, the map $T$ is given by
\begin{align}\label{pentFormulas}
x_i' = x_i \frac{1 - x_{i-1}y_{i-1}}{1 - x_{i+1}y_{i+1}},\quad
y_i' = y_{i+1} \frac{1 - x_{i+2}y_{i+2}}{1 - x_{i}y_{i}}.
\end{align}
Now, we recall the construction of the monodromy invariants. For any equivalence class $P \in \mathcal P_n$ of twisted $n$-gons, the corresponding monodromy $M$ is a $3 \times 3$  matrix defined up to scalar multiplication and conjugation. Therefore, the quantities
$$
\Omega_1 := \frac{\tr^3(M^{-1})}{\det(M^{-1})}, \quad \Omega_2 := \frac{\tr^3(M)}{\det(M)}
$$
are well-defined\footnote{There is a typo in the corresponding formula (2.8) of \cite{ovsienko2010pentagram}. To be consistent with subsequent definitions, $\Omega_1$ should be defined using $M^{-1}$, and $\Omega_2$ using $M$, not the other way around.}. Furthermore, since the pentagram map on polygons preserves the monodromy, the pentagram map  $T \colon \mathcal P_n \to \mathcal P_n$ on equivalence classes preserves $\Omega_1, \Omega_2$. 
Further, let 
$$
O_n := \prod_{i=1}^n x_i, \quad E_n := \prod_{i=1}^n y_i.
$$
Then from formulas \eqref{pentFormulas} it immediately follows that the functions $O_n, E_n$ are invariant under the pentagram map. Therefore, the quantities 
$$
\tilde \Omega_1 := O_n^2 E_n \Omega_1, \quad \tilde \Omega_2 := O_nE_n^2 \Omega_2
$$
are also invariant. In \cite{schwartz2008discrete} it is shown that the $\tilde \Omega_1, \tilde \Omega_2$ are polynomials in corner invariants~$x_i, y_i$. 
\par
Another corollary of  \eqref{pentFormulas} is that the pentagram map commutes with the \textit{rescaling operation} \begin{align}\label{rescal}R_z \colon (x_1, y_1, \dots, x_n, y_n) \mapsto (zx_1, z^{-1}y_1, \dots, zx_n, z^{-1}y_n).\end{align}
Therefore, all coefficients in $z$ of the polynomials $
R_z^*(\tilde \Omega_1) $, $R_z^*(\tilde \Omega_2)
$, where $R_z^*$ denotes the natural action of $R_z$ on functions of corner invariants, are also first integrals.  In~\cite{schwartz2008discrete} it is shown that
\begin{align}\label{monoInv}
R_z^*(\tilde \Omega_1) = \left(1+\sum_{k=1}^{[n/2]}O_kz^k\right)^{\!\!3}\!, \quad  R_z^*(\tilde \Omega_2) = \left(1+\sum_{k=1}^{[n/2]}E_kz^{-k}\right)^{\!\!3}\!,
\end{align}
where $O_k, E_k$ are certain polynomials in corner invariants.
\begin{definition}
The functions $O_k, E_k$, where $k = 1, \dots, [n/2], n$, are called \textit{the monodromy invariants}.
\end{definition}
The monodromy invariants are polynomials in $x_i, y_i$ preserved by the pentagram map. In \cite{ovsienko2010pentagram} they are used to prove that the pentagram map is a completely integrable system.\par \medskip
\section{The spectral curve}\label{sec25}

In this section, we define the spectral curve and relate it to the monodromy invariants. The pentagram spectral curve first appeared in \cite{soloviev2013integrability}, where it was used to prove algebraic integrability.\par
Let $P \in \mathcal P_n$ be an equivalence class of twisted $n$-gons. Then the rescaling operation $R_z \colon \mathcal P_n \to \mathcal P_n$ defines a family $P_z \in \mathcal P_n, P_z := R_z(P)$ parametrized by $z \in \C^*$. For each equivalence class $P_z$, we have the corresponding conjugacy class of monodromies. Take any representative of this conjugacy class and lift it to $\mathrm{GL}_3$. This gives a family $M(z)$ of $3 \times 3$ matrices parametrized by $z \in \C^*$.

\begin{definition}
We call $M(z)$ a \textit{scaled monodromy matrix} of the equivalence class $P$.
\end{definition}
The scaled monodromy matrix of a given equivalence class is determined up to multiplication by a scalar function of $z$ and conjugation by a $z$-dependent matrix.
\par
 Now, given an equivalence class $P \in \mathcal P_n$, we take its scaled monodromy $M(z)$ and set
$$
R(z,w) := \frac{1}{\lambda(z)^3}\det(\lambda(z) w I - M(z)),
$$
where $\lambda(z) := (z^{n}\det M(z))^{1/3}$, and $I$ is the identity matrix. The function $R(z,w) $ is the characteristic polynomial of $M(z)$ normalized in such a way that it does not change if $M(z)$ is multiplied by a scalar function of $z$. Furthermore, $R(z,w)$ depends only on the conjugacy class of $M(z)$, thus being a well-defined function of $z$, $w$, and the equivalence class $P$. Explicitly, we have
\begin{align}\label{specPolynomial}
\begin{aligned}
R(z,w) = w^3 - E(z^{-1})w^2  + O(z)z^{-n}w - z^{-n},
\end{aligned}
\end{align}
where
$$
E(z) := E_n^{-\frac{2}{3}}O_n^{-\frac{1}{3}}\left(1+\sum_{k=1}^{[n/2]}E_kz^{k}\right)\!, \quad O(z) := O_n^{-\frac{2}{3}}E_n^{-\frac{1}{3}}\left(1+\sum_{k=1}^{[n/2]}O_kz^{k}\right)\!,
$$
and $E_k, O_k$ are the monodromy invariants of the equivalence class $P$.
\begin{definition}
The zero locus of $R(z,w)$ in the complex torus $(\C^*)^2$ is called the \textit{spectral curve}.
\end{definition}
The spectral curve encodes all the monodromy invariants $E_k$, $O_k$. In particular, we have the following immediate corollary of \eqref{specPolynomial}.
\begin{corollary}\label{mainCor}
The following conditions are equivalent.
\begin{enumerate} 
\item The spectral curve is invariant under the involution $(z,w) \leftrightarrow (z^{-1}, w^{-1})$.
\item The monodromy invariants satisfy $E_k = O_k$ for all $k = 1, \dots, [n/2], n$.
\end{enumerate}
\end{corollary}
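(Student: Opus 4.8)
The plan is to turn the geometric invariance in statement~(1) into an algebraic identity between the two Laurent polynomials $R(z,w)$ and $R(z^{-1},w^{-1})$, and then read off the equalities $E_k=O_k$ by comparing coefficients in the explicit formula \eqref{specPolynomial}. First I would compute the effect of the involution $\sigma\colon(z,w)\mapsto(z^{-1},w^{-1})$ on the spectral polynomial. Substituting into \eqref{specPolynomial} gives
\begin{align*}
R(z^{-1},w^{-1}) = w^{-3} - E(z)w^{-2} + O(z^{-1})z^{n}w^{-1} - z^{n}.
\end{align*}
Since $-z^{-n}w^3$ is a nowhere-vanishing function on $(\C^*)^2$, the zero locus of $R(z^{-1},w^{-1})$ coincides with that of
\begin{align*}
\tilde R(z,w) := -z^{-n}w^3\,R(z^{-1},w^{-1}) = w^3 - O(z^{-1})w^2 + E(z)z^{-n}w - z^{-n}.
\end{align*}
Thus $\tilde R$ has exactly the same shape as $R$, but with the roles of $E$ and $O$ interchanged, and $\sigma$-invariance of the curve is equivalent to $R$ and $\tilde R$ defining the same curve.

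Next I would prove the equivalence in both directions. For the implication (2)$\Rightarrow$(1): if $E_k=O_k$ for all $k$, then in particular $E_n=O_n$, so the normalizing prefactors $E_n^{-2/3}O_n^{-1/3}$ and $O_n^{-2/3}E_n^{-1/3}$ in the definitions of $E(z)$ and $O(z)$ coincide; together with $E_k=O_k$ for $k=1,\dots,[n/2]$ this yields $E(z)=O(z)$ as Laurent polynomials, whence $\tilde R=R$ and the curve is $\sigma$-invariant. For the converse (1)$\Rightarrow$(2): $\sigma$-invariance means $R$ and $\tilde R$ cut out the same curve in $(\C^*)^2$. Both are monic cubics in $w$ with identical constant term $-z^{-n}$, so for generic fixed $z$ they are monic cubics with the same (three, distinct) roots in $w$; it follows that $R=\tilde R$ identically. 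Comparing the $w^2$- and $w^1$-coefficients then forces $E(z)=O(z)$.

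It then remains to extract the individual identities from the single relation $E(z)=O(z)$. Matching the constant ($z^0$) terms gives $E_n^{-2/3}O_n^{-1/3}=O_n^{-2/3}E_n^{-1/3}$, which simplifies to $E_n=O_n$; this settles the top invariant and shows the two prefactors agree. Dividing the prefactors out reduces the identity to $1+\sum_k E_k z^k = 1+\sum_k O_k z^k$, and comparing coefficients of $z^k$ yields $E_k=O_k$ for $k=1,\dots,[n/2]$. Combined with $E_n=O_n$ this is precisely statement~(2).

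The only genuinely nontrivial step I anticipate is the implication (1)$\Rightarrow$(2), where I pass from equality of zero loci to equality of polynomials; this uses that the spectral curve is reduced (equivalently, that $R(z,\cdot)$ has three distinct roots for generic $z$), so that a monic cubic in $w$ vanishing on it is uniquely determined. For a generic equivalence class $P$ this holds, and since both conditions (1) and (2) are closed, the equivalence extends to all $P$ by continuity. I expect this reducedness point to absorb essentially all of the care, whereas the substitution and the coefficient bookkeeping are routine — which is why the statement is legitimately an \emph{immediate} corollary of \eqref{specPolynomial}.
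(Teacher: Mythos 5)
Your core computation is exactly the argument the paper has in mind: the paper gives no written proof (the corollary is declared an ``immediate'' consequence of \eqref{specPolynomial}, and the Newton-polygon remark records the same central symmetry interchanging $E_k$ and $O_k$), and your substitution $R(z^{-1},w^{-1}) = w^{-3} - E(z)w^{-2} + O(z^{-1})z^{n}w^{-1} - z^{n}$, the normalization by the unit $-z^{-n}w^{3}$, and the coefficient bookkeeping (constant terms give $E_n^{-2/3}O_n^{-1/3} = O_n^{-2/3}E_n^{-1/3}$, hence $E_n = O_n$, after which the prefactors cancel and $E_k = O_k$ follows termwise) are all correct and are precisely how one unpacks the paper's assertion. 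You are also right that (2)$\Rightarrow$(1) is unconditional, and that the only nontrivial point is passing from equality of zero loci to the polynomial identity $R = \tilde R$ in the direction (1)$\Rightarrow$(2), which requires $R$ to be squarefree in $w$ over $\C(z)$.

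The one flaw is your closing sentence: from ``(1)$\Leftrightarrow$(2) on a dense open set'' plus ``both conditions are closed'' you cannot conclude the equivalence everywhere. Even granting that both loci $A$ (where (1) holds) and $B$ (where (2) holds) are closed, $A \cap U = B \cap U$ on a dense open $U$ yields only $\overline{A \cap U} \subseteq A \cap B$; a point of $A$ outside $\overline{A \cap U}$ — a polygon whose spectral curve is non-reduced yet $\sigma$-invariant as a set — is not forced to lie in $B$. And this is not an empty worry: for a non-reduced cubic $R = (w-f)^{2}(w-g)$ whose involution swaps the two branches, the same computation gives $\tilde R = (w-g)^{2}(w-f)$, which has the same zero locus but different coefficients, so the set-theoretic implication (1)$\Rightarrow$(2) is genuinely at risk in the degenerate case and continuity cannot repair it. The honest formulation of your proof is that the equivalence holds whenever $R$ is squarefree (in particular generically), while (2)$\Rightarrow$(1) holds always. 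Note that this restriction costs nothing for the paper's application: in the proof of Theorem \ref{thm1}, self-duality yields the polynomial identity $R(z^{-1},w^{-1}) = -z^{n}w^{-3}R(z,w)$ directly, i.e.\ $\tilde R = R$ at the level of equations, so the delicate zero-locus-to-equation step is never actually invoked there.
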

\begin{remark}
The involution on the spectral curve can be illustrated by means of the Newton polygon. Recall that the Newton polygon of an algebraic curve $\sum a_{ij}z^iw^j = 0$ is the convex hull of $\{ (i,j) \in \Z^2 \mid a_{ij} \neq 0\}$. The Newton polygon of the pentagram spectral curve is a parallelogram depicted in Figure \ref{NP} (here we take $n=7$). The labels $E_k, O_k$ at integer points are, up to factors $E_n^{-{2}/{3}}O_n^{-{1}/{3}}$, $O_n^{-{2}/{3}}E_n^{-{1}/{3}}$, the coefficients of the corresponding monomials. The involution $(z,w) \leftrightarrow (z^{-1}, w^{-1})$ corresponds to the symmetry of the parallelogram with respect to its center $(-{n}/{2}, {3}/{2})$, interchanging $E_k$ with $O_k$.
\end{remark}

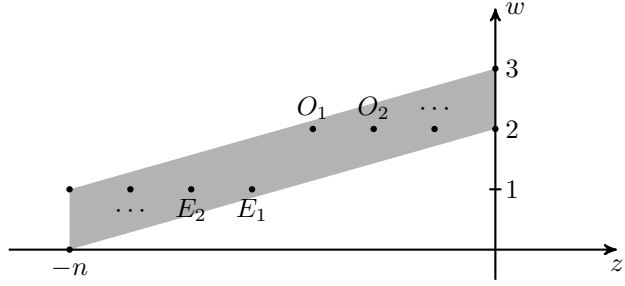
\begin{figure}[t]
\centering
\begin{tikzpicture}[thick, scale = 0.8]
\draw [->] (-8,0) -- (2,0);
\draw [->] (0,-0.5) -- (0,4);
\fill [opacity = 0.3] (0,2) -- (0,3) -- (-7,1) -- (-7,0) -- cycle;
 \fill (-7,0) circle [radius=1.5pt];
  \fill (-7,1) circle [radius=1.5pt];
   \fill (0,2) circle [radius=1.5pt];
    \fill (0,3) circle [radius=1.5pt];
      \fill (-6,1) circle [radius=1.5pt];
        \fill (-5,1) circle [radius=1.5pt];
                \fill (-4,1) circle [radius=1.5pt];
                  \fill (-3,2) circle [radius=1.5pt];
         \fill (-2,2) circle [radius=1.5pt];
                 \fill (-1,2) circle [radius=1.5pt];  
                 \coordinate [label={right:$w$}]() at (0,4);
                   \coordinate [label={below:$z$}]() at (2,0);    
                   \coordinate [label={below:$-n$}]() at (-7,0);  
                  \coordinate [label={right:$2$}]() at (0,2);       
                     \coordinate [label={right:$3$}]() at (0,3);   
                       \coordinate [label={right:$1$}]() at (0,1);     
                       \draw (-0.1,1) -- (0.1,1);  
                          \coordinate [label={below:$\vphantom{E_3}\cdots$}]() at (-6,1);
                        \coordinate [label={below:$E_2$}]() at (-5,1);
                          \coordinate [label={below:$E_1$}]() at (-4,1);
                        \coordinate [label={above:$O_2$}]() at (-2,2);
                          \coordinate [label={above:$O_1$}]() at (-3,2);     
                                   \coordinate [label={above:$\vphantom{O_3}\cdots$}]() at (-1,2);                       

\end{tikzpicture}
\caption{The Newton polygon of the spectral curve.}\label{NP}
\end{figure}
\begin{remark}
Note that in \cite{soloviev2013integrability} the spectral curve is defined using the matrix $M(z^{-1})^{-1}$. 
Although this definition is not completely equivalent to ours, the corresponding curves are related by a change of variables.

\end{remark}
\medskip
\section{The monodromy matrix via corner invariants}\label{sec3}
In this section, we express the monodromy in terms of the corner invariants. Our approach is similar to that of \cite{soloviev2013integrability}, but we do not assume that the number of vertices is not divisible by $3$. To get rid of this assumption, we use the same idea as in Remark 4.4 of \cite{ovsienko2010pentagram}.  

\begin{lemma}\label{lemma1}
Let $(v_i)$ be a twisted $n$-gon in $\P^2$ with monodromy $M \in \P\mathrm{GL}_3$ and corner invariants $(x_i, y_i)$. Then $M$ is conjugate to any of the matrices\footnote{When we say that  $M \in \P\mathrm{GL}_3$ is conjugate to a matrix $M' \in \mathrm{GL}(3)$, we mean that $M$ is conjugate to the projection of $M'$ to $ \P\mathrm{GL}_3$.}
$
 M_i :=L_iL_{i+1} \cdots L_{i+n-1},
$
where $i \in \Z$ is arbitrary, and
\begin{align}\label{li}
L_{j} := \left(\begin{array}{ccc}0 & 0 & 1 \\ -x_{j}y_{j} & 0 & 1 \\0 & -y_{j} & 1\end{array}\right).
\end{align}
\end{lemma}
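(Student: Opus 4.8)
The plan is to realize the abstract monodromy $M$ as a product of explicit transition matrices obtained by lifting the polygon to $\C^3$ and writing down the linear recurrence satisfied by the lift. Before constructing anything, I would observe that the dependence on the starting index $i$ is inessential: since the corner invariants are $n$-periodic we have $L_{i+n} = L_i$, and therefore
\begin{align*}
M_{i+1} = L_{i+1}\cdots L_{i+n} = L_i^{-1}\left(L_i L_{i+1}\cdots L_{i+n-1}\right)L_i = L_i^{-1} M_i L_i .
\end{align*}
Thus all the $M_i$ are conjugate to one another, and it suffices to prove the statement for a single value of $i$.

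Next I would lift each vertex $v_i$ to a vector $V_i \in \C^3 \setminus \{0\}$. The genericity assumption that no three consecutive vertices are collinear means exactly that each triple $(V_{i-1}, V_i, V_{i+1})$ is a basis of $\C^3$; consequently $V_{i+1}$ can be expanded in the preceding frame,
\begin{align*}
V_{i+1} = p_i V_{i-2} + q_i V_{i-1} + r_i V_i ,
\end{align*}
and the change-of-frame matrix from $(V_{i-2}, V_{i-1}, V_i)$ to $(V_{i-1}, V_i, V_{i+1})$ advances the lift by one step. Taking the product of these matrices over a full period $i, i+1, \dots, i+n-1$ produces a matrix sending the initial frame to its image under the monodromy, hence a lift to $\mathrm{GL}_3$ of $M$ up to conjugation. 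The content of the lemma is that, after a suitable choice of the scaling freedom $V_i \mapsto t_i V_i$ in the lift, this transition matrix becomes exactly $L_i$, with the two nontrivial entries equal to $-x_i y_i$ and $-y_i$.

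Establishing that last identification is where the real work lies, and it is the step I expect to be the main obstacle. One must compute the coefficients $p_i, q_i, r_i$ — equivalently the entries of the transition matrix in the normalized gauge — directly from the definition of the corner invariants as the cross-ratios in Figure \ref{CI}. Concretely, I would express the auxiliary intersection points $\bar v_i, \tilde v_i, \hat v_i$ and the cross-ratios $x_i = [v_{i-2}, v_{i-1}, \tilde v_i, \bar v_i]$, $y_i = [\bar v_i, \hat v_i, v_{i+1}, v_{i+2}]$ as ratios of $3\times 3$ determinants of the lifted vectors, and then solve for the recurrence coefficients; choosing the scalars $t_i$ so as to clear the determinant factors should yield precisely the matrix \eqref{li}. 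A useful consistency check along the way is that $\det L_j = x_j y_j^2$, so that $\det M_i = \prod_{j} x_j y_j^2 = O_n E_n^2$, in agreement with the normalization of the spectral data in Section \ref{sec25}.

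Finally, I must address the normalization subtlety flagged before the lemma, namely that the natural determinant normalization of the lift (e.g.\ $\det(V_{i-1}, V_i, V_{i+1}) = 1$) cannot always be achieved $n$-periodically when $3 \mid n$, because solving $t_{i-1} t_i t_{i+1} = \text{const}$ around the cycle meets an obstruction exactly in that case. Following the idea of Remark 4.4 of \cite{ovsienko2010pentagram}, I would not insist on an exactly periodic normalized lift but instead allow the lift to be periodic only up to an overall scalar, absorbing that scalar into the final matrix. Since $M \in \P\mathrm{GL}_3$ is in any case defined only up to scaling, this does not affect the conclusion, and the argument then goes through uniformly for all $n$, with no divisibility hypothesis. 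Combined with the conjugacy relation $M_{i+1} = L_i^{-1}M_i L_i$ from the first step, this completes the proof for every $i$.
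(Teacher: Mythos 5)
Your overall skeleton is the same as the paper's: lift the polygon to $\C^3$, write the frame-to-frame transition matrices, multiply over a period to realize the monodromy up to conjugation, and handle the normalization issue for $3 \mid n$ following Remark 4.4 of \cite{ovsienko2010pentagram}. Your opening reduction $M_{i+1} = L_i^{-1}M_iL_i$ and the consistency check $\det L_j = x_jy_j^2$ are both correct. But the step you yourself flag as ``the real work'' contains a genuine error, not merely an omission: you claim that after a suitable choice of the rescaling $V_i \mapsto t_iV_i$, the change-of-frame matrix from $(V_{i-2},V_{i-1},V_i)$ to $(V_{i-1},V_i,V_{i+1})$ ``becomes exactly $L_i$''. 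This is impossible. As long as each frame consists of three \emph{consecutive} lifted vertices, the transition matrix is forced to have companion form: its first two columns are $e_2$ and $e_3$ (unit entries on the subdiagonal), simply because the second and third vertices of one frame are the first and second of the next --- and this holds for \emph{every} choice of the scalars $t_i$. The matrix $L_j$ of \eqref{li} is not of this form: its subdiagonal entries are $-x_jy_j$ and $-y_j$ and its first two columns are not $e_2, e_3$. What is actually needed is a second gauge of a different kind. In the paper, after normalizing $\det(V_i,V_{i+1},V_{i+2})=1$ so that $V_{i+3}=a_iV_{i+2}+b_iV_{i+1}+V_i$ with transition matrices $N_i$ as in \eqref{ni}, one conjugates by the frame-dependent diagonal matrices $\Lambda_i = \mathrm{diag}(1,b_i,a_i)$, giving $L_j = a_{j-1}^{-1}\Lambda_{j-2}^{-1}N_{j-2}\Lambda_{j-1}$. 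This gauge scales a given vertex \emph{differently} in the three frames containing it, so it does not come from any rescaling $V_i \mapsto t_iV_i$; and precisely because it does not, the full-period product $M_i$ is a priori conjugate to the monodromy only up to boundary diagonal factors, cf.\ \eqref{mmtilde}.

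Collapsing those boundary factors is where the real bookkeeping lies, and it is absent from your sketch beyond the phrase ``absorbing that scalar''. The paper derives $N_{i+n} = D_iN_iD_{i+1}^{-1}$ from the monodromy relation, whence $t_{i+3} = t_i$, $a_{i+n} = a_it_{i+2}/t_i$, $b_{i+n}=b_it_{i+1}/t_i$ as in \eqref{quasiPeriod2}, and hence the quasiperiodicity $\Lambda_{i+n} = t_i^{-1}D_i\Lambda_i$ of \eqref{quasiPeriod3}; only then does \eqref{mmtilde} collapse to $M_i = (c_i/t_{i-2})\,\Lambda_{i-2}^{-1}\tilde M_{i-2}\Lambda_{i-2}$, an honest conjugation up to a scalar. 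Note that this argument never requires the $t_i$ to be constant --- that is exactly how the divisibility-by-$3$ obstruction is sidestepped, so your instinct there was right, but the mechanism lives in these relations rather than in a renormalization of the lift. Finally, your plan to recompute the coefficients from the cross-ratio definitions via ratios of $3\times 3$ determinants is feasible but unexecuted; the paper instead cites Lemma 4.5 of \cite{ovsienko2010pentagram} for $x_i = a_{i-2}/(b_{i-2}b_{i-1})$, $y_i = -b_{i-1}/(a_{i-2}a_{i-1})$. To repair your proof: keep your lift and transition matrices, import (or actually prove) those formulas, introduce the $\Lambda$-gauge relating $N_{j-2}$ to $L_j$, and carry out the quasiperiodicity computation above.
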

\begin{proof}
 
Using that $v_{i}, v_{i+1}, v_{i+2}$ are not collinear, we lift the points $v_i$ to vectors $V_i \in \C^3$ in such a way that
\begin{align}\label{detCond}
\det (V_i, V_{i+1}, V_{i+2}) = 1 \quad \,\forall\, i \in \Z.
\end{align}
(Here and in what follows, we regard $V_i$'s as column vectors.) From \eqref{detCond} it follows that 
\begin{align}\label{diffEqn}
V_{i+3} = a_i V_{i+2} + b_i V_{i+1} + V_i
\end{align}
for some sequences $a_i, b_i \in \C$. In matrix form, this can be written as
\begin{align}\label{matDiffEqn}
W_{i+1} = W_iN_i,
\end{align}
where \begin{align}\label{ni}
W_i :=  (V_i, V_{i+1}, V_{i+2}), \quad N_i := \left(\begin{array}{ccc}0 & 0 & 1 \\1 & 0 & b_i \\0 & 1 & a_i\end{array}\right).
\end{align}
It follows that
\begin{align}\label{nSteps}
W_{i+n} = W_i (N_iN_{i+1} \cdots N_{i+n-1}).
\end{align}
Further, take an arbitrary lift $\tilde M \in \mathrm{GL}_3$ of the monodromy $M$. Note that since $  M(v_i) = v_{i+n}$, we have
$
 \tilde MV_i =t_iV_{i+n}
$ 
for some sequence $t_i \in \C$. This can be rewritten as
\begin{align}\label{matMonEqn}
  \tilde MW_i = W_{i+n}D_i,
\end{align}
where $D_i$ is a diagonal matrix $D_i := \mathrm{diag}(t_i, t_{i+1}, t_{i+2})$. Comparing \eqref{nSteps} and \eqref{matMonEqn}, we conclude that $\tilde M$ (and thus $M$) is conjugate to
\begin{align}\label{mimat}
 \tilde M_i := (N_iN_{i+1} \cdots N_{i+n-1})D_i,
\end{align}
for any $i \in \Z$.\par
Now, we use the result of  Lemma 4.5 of \cite{ovsienko2010pentagram} which says that given any lift $(V_i)$ of a twisted polygon satisfying \eqref{diffEqn}, the corner invariants are given by
$$
x_i = \frac{a_{i-2}}{b_{i-2}b_{i-1}}, \quad y_i = -\frac{b_{i-1}}{a_{i-2}a_{i-1}}.
$$
Using these formulas, one easily verifies that the matrix $L_j$ given by \eqref{li} is related to $N_j$ given by \eqref{ni} by means of a gauge type transformation:
$$
L_{j} = \frac{1}{a_{j-1}}\Lambda_{j-2}^{-1}N_{j-2}\Lambda_{j-1}, \quad \Lambda_i := \mathrm{diag}(1, b_i, a_i).
$$
 Therefore, for the product $M_i =L_iL_{i+1} \cdots L_{i+n-1}$, we have
\begin{align}\label{mmtilde}
M_i = c_i\Lambda_{i-2}^{-1} \tilde M_{i-2}  D_{i-2}^{-1} \Lambda_{i+n-2},
\end{align}
where $\tilde M_j$ is given by \eqref{mimat}, and $c_i \in \C$. Further, multiplying \eqref{matDiffEqn} by the monodromy matrix $ \tilde M$ from the left and using \eqref{matMonEqn}, we get
$$
W_{i+n + 1}D_{i+1} = W_{i+n}D_iN_i.
$$
Comparing the latter equation with  \eqref{matDiffEqn}, we see that
\begin{align}
N_{i+n}  = D_iN_iD_{i+1}^{-1}.
\end{align}
Spelling out this equation, we get the relations
\begin{align}\label{quasiPeriod2}
t_{i+3} = t_i,\quad a_{i+n} = a_i \frac{t_{i+2}}{t_i}, \quad b_{i+n} = b_i \frac{t_{i+1}}{t_i}
\end{align}
(cf. Remark 4.4 of \cite{ovsienko2010pentagram}). This implies the following quasiperiodicity condition for $\Lambda_i$:
\begin{align}\label{quasiPeriod3}
\Lambda_{i+n} = \frac{1}{t_i}D_i\Lambda_i.
\end{align}
Using the latter equation, \eqref{mmtilde} can be  rewritten as
\begin{align*}
M_i = \frac{c_i}{t_{i-2}}\Lambda_{i-2}^{-1} \tilde M_{i-2} \Lambda_{i-2}.
\end{align*}
Since $\tilde M_{i-2}$ is conjugate to the monodromy matrix $M$, this proves the lemma.
\end{proof}
\begin{corollary}\label{scaledMonCor}
For an equivalence class $P \in \mathcal P_n$ with corner invariants $(x_i,y_i)$, the scaled monodromy is given by any of the matrices $M_i(z) := L_i(z)L_{i+1}(z)\cdots L_{i+n-1}(z),$
where
\begin{align}\label{canScaledLax}
L_{j}(z) := \left(\begin{array}{ccc}0 & 0 & 1 \\ -x_{j}y_{j} & 0 & 1 \\0 & -z^{-1}y_{j} & 1\end{array}\right). \end{align}
\end{corollary}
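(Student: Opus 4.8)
The plan is to deduce this immediately from Lemma \ref{lemma1}, applied not to $P$ itself but to the rescaled polygon $P_z = R_z(P)$. Recall from the definition in Section \ref{sec25} that the scaled monodromy $M(z)$ of $P$ is, by construction, a $\mathrm{GL}_3$-lift of a representative of the conjugacy class of monodromies of $P_z$, and that it is determined only up to conjugation and multiplication by a scalar function of $z$. Consequently, to prove the corollary it suffices to exhibit, for each $i$, a single $z$-dependent matrix which is conjugate to the monodromy of $P_z$ and which coincides with $M_i(z) = L_i(z)L_{i+1}(z)\cdots L_{i+n-1}(z)$.

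First I would read off the corner invariants of $P_z$. By the very definition of the rescaling operation \eqref{rescal}, the corner invariants of $P_z = R_z(P)$ are $(z x_i, z^{-1} y_i)$. Feeding these into Lemma \ref{lemma1}, the monodromy of $P_z$ is conjugate to the product $L_i L_{i+1}\cdots L_{i+n-1}$ in which each factor is now assembled from the corner invariants of $P_z$; that is, the $j$-th factor is obtained from \eqref{li} by the substitution $x_j \mapsto z x_j$, $y_j \mapsto z^{-1} y_j$. The only computation involved is that this substitution reproduces exactly the matrix $L_j(z)$ of \eqref{canScaledLax}: in the $(2,1)$ entry the powers of $z$ cancel, $-(z x_j)(z^{-1} y_j) = -x_j y_j$, whereas the $(3,2)$ entry becomes $-z^{-1} y_j$. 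Hence the substituted factor is precisely $L_j(z)$.

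Combining these observations, $M_i(z) = L_i(z)\cdots L_{i+n-1}(z)$ is conjugate to the monodromy of $P_z$ for every $i \in \Z$, and is therefore a valid scaled monodromy of $P$, as claimed. I do not expect any genuine obstacle: the entire content is the cancellation $(z x_j)(z^{-1} y_j) = x_j y_j$, which shows that under rescaling only the single off-diagonal entry $-y_j$ acquires the factor $z^{-1}$, in agreement with \eqref{canScaledLax}. The one point worth stating carefully is the reduction itself, namely that the scaled monodromy of $P$ \emph{is} the monodromy of $P_z$, so that Lemma \ref{lemma1} may be invoked directly.
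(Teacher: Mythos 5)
Your proof is correct and is exactly the paper's argument: the paper's one-line proof (``This follows from formula \eqref{rescal} for the rescaling action'') is precisely your reduction, namely applying Lemma \ref{lemma1} to $P_z = R_z(P)$ with corner invariants $(zx_i, z^{-1}y_i)$ and noting the cancellation $-(zx_j)(z^{-1}y_j) = -x_jy_j$ in the $(2,1)$ entry while the $(3,2)$ entry picks up $-z^{-1}y_j$. You have merely made explicit the details the paper leaves implicit, including the correct observation that the scaled monodromy of $P$ is by definition (a lift of) the monodromy of $P_z$.
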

\begin{proof}
This follows from formula \eqref{rescal} for the rescaling action.
\end{proof}

\section{Monodromy invariants for inscribed polygons}\label{sec4}
In this section, we show that for inscribed polygons the scaled monodromy satisfies a certain self-duality relation and then use this to prove Theorem \ref{thm1}.
\begin{lemma}\label{lemma2}
Consider an equivalence class $P \in \mathcal P_n$ of polygons inscribed in a nondegenerate conic. Then the corresponding scaled monodromy matrix can be chosen to satisfy the self-duality relation
\begin{align}\label{sdRelation}
M(z) = (M(z^{-1})^{-1})^t.
\end{align}
\end{lemma}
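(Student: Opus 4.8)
The plan is to combine the explicit product formula for the scaled monodromy from Corollary~\ref{scaledMonCor} with the one geometric fact that singles out inscribed polygons, namely that their monodromy preserves the conic. First I would observe that if $(v_i)$ is inscribed in a nondegenerate conic $C$ with symmetric Gram matrix $Q$, then the monodromy $M$ preserves $C$: all vertices lie on $C$ and $M(v_i)=v_{i+n}$, so $M$ carries infinitely many points of $C$ (at any rate, more than four) back onto $C$, and since two distinct conics meet in at most four points we get $M(C)=C$, i.e.\ $M^{t}QM=\mu Q$ for a scalar $\mu$. Equivalently $(M^{-1})^{t}=\mu^{-1}QMQ^{-1}$, so $M$ is conjugate, up to scalar, to $(M^{-1})^{t}$. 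This is exactly relation \eqref{sdRelation} evaluated at $z=1$, and the remaining task is to upgrade it to an identity of $z$-families relating $M(z)$ and $M(z^{-1})$.

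To produce the inversion $z\mapsto z^{-1}$, I would bring in projective duality. Duality sends the monodromy of a polygon to its contragredient $(M^{-1})^{t}$ and acts on corner invariants by interchanging the two sequences $(x_i)\leftrightarrow(y_i)$ together with a reversal of the index. Since the rescaling \eqref{rescal} multiplies $x_i$ by $z$ and $y_i$ by $z^{-1}$, this interchange conjugates $R_z$ into $R_{z^{-1}}$. For an inscribed polygon the conic identifies the polygon of tangent lines with the original vertices (the polar of $v_i\in C$ is the tangent at $v_i$, which $Q$ carries back to $v_i$), so the dual polygon coincides with $P$ through the isomorphism $Q$. Reading these facts off together, the scaled monodromy of the dual at parameter $z$ is, up to conjugation, both $(M(z)^{-1})^{t}$ and $M(z^{-1})^{-1}$; matching them yields $(M(z)^{-1})^{t}\sim M(z^{-1})^{-1}$, that is $M(z)\sim (M(z^{-1})^{-1})^{t}$.

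Concretely, and in the spirit of the machinery already set up, I would verify this through the matrices $L_j(z)$ of \eqref{canScaledLax}. Because transposition reverses a product and inversion reverses it a second time, one has $(M_i(z^{-1})^{-1})^{t}=\prod_{j=i}^{i+n-1}\widehat L_j(z)$ in increasing order, where $\widehat L_j(z):=(L_j(z^{-1})^{-1})^{t}$ is computed explicitly as a sparse $3\times 3$ matrix. I would then translate ``inscribed'' into a relation on $(x_i,y_i)$ by writing the incidence conditions $v_i^{t}Qv_i=0$ in terms of the recursion coefficients $a_i,b_i$ of \eqref{diffEqn}, and use this relation to exhibit a periodic gauge $S_j$ and an index shift $c$ with $\widehat L_j(z)=S_j^{-1}L_{j+c}(z)\,S_{j+1}$. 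Telescoping the $S_j$ over one period identifies $(M(z^{-1})^{-1})^{t}$ with a conjugate of $M_{i+c}(z)$. Finally, since the scaled monodromy is defined only up to conjugation and multiplication by a scalar function of $z$, I would choose the representative in which both are trivial, turning the conjugacy into the exact identity \eqref{sdRelation}, and hence into \eqref{monodromyRelation}.

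The step I expect to be the main obstacle is the bookkeeping forced by the transpose. Transposition reverses the order of the factors and effectively reverses the cyclic labeling, so the inscribed relation on $(x_i,y_i)$ must be precisely the one that compensates this reversal and makes the gauge matrices $S_j$ genuinely $n$-periodic. Checking this periodicity — equivalently, that $\prod\widehat L_j(z)$ closes up as a conjugate of an honest monodromy $M_{i+c}(z)$ rather than of some non-periodic product — together with pinning down the correct shift $c$, is the delicate point; everything else is routine once the inscribed relation on the corner invariants has been isolated.
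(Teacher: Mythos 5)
Your geometric derivation in the second paragraph rests on a false identification. The polarity with respect to $C$ sends the vertex $v_i \in C$ to the \emph{tangent line} to $C$ at $v_i$, whereas the dual polygon --- the object whose scaled monodromy is represented by $(M(z^{-1})^{-1})^t$ --- has as its $i$-th vertex the \emph{side} $[v_i, v_{i+1}]$. These are different polygons: the tangent-line polygon is indeed projectively equivalent to $(v_i)$ (it is its image under the polarity $Q \colon \P^2 \to (\P^2)^*$), but the dual polygon of an inscribed polygon is \emph{circumscribed} about the dual conic and is in general not projectively equivalent to $P$; in the picture of Theorem \ref{thm2} and Figure \ref{IC}, inscribed and circumscribed polygons fill two \emph{distinct} parallel tori. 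So the premise ``the dual polygon coincides with $P$ through the isomorphism $Q$'' fails, and with it the matching step; asserting that the dual's scaled monodromy agrees (up to conjugation and $z \mapsto z^{-1}$) with the original's is essentially the content of the lemma, so the heuristic is circular. Note also that, as literally written, matching $(M(z)^{-1})^t \sim M(z^{-1})^{-1}$ yields $M(z) \sim M(z^{-1})^t \sim M(z^{-1})$, a different and generically false relation (it would force the spectral curve to be invariant under $z \mapsto z^{-1}$ with $w$ fixed), not \eqref{sdRelation}. What survives is only the $z=1$ statement that $M(1)$ is conjugate to an orthogonal matrix, which the paper records as a remark but which does not bootstrap to the $z$-family.

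Your third paragraph has the correct skeleton and is the paper's strategy in outline --- use that $X \mapsto (X^{-1})^t$ preserves the order of products, intertwine $\widehat L_j(z) := (L_j(z^{-1})^{-1})^t$ with the original Lax matrices by a periodic gauge, then normalize the representative --- but the two load-bearing steps are left unexecuted, and one of them is misjudged as routine. First, the usable form of ``inscribed'' is not the incidence conditions $v_i^t Q v_i = 0$ rewritten in the recursion coefficients $a_i, b_i$, but the cross-ratio coordinates on the conic: setting $p_i := 1 - [v_{i-2}, v_{i-1}, v_i, v_{i+1}]$, one has $x_i = (1-p_i)/p_{i+1}$, $y_i = (1-p_{i+1})/p_i$ (Lemma 3.1 of \cite{schwartz2011pentagram}), and after the diagonal $n$-periodic gauge $Q_j = \mathrm{diag}(p_j^{-1}-1,\, 1-p_j,\, 1)$ the gauged Lax matrix $L_j'(z)$ of \eqref{prodFormula} satisfies $(L_j'(z^{-1})^{-1})^t = S(z)^{-1} L_j'(z) S(z)$ with a single $j$-\emph{independent} matrix $S(z)$ and no index shift $c$ at all. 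In these coordinates your anticipated ``main obstacle'' --- periodicity of the $S_j$ and pinning down the shift --- simply dissolves; the real work is finding the parametrization that makes the intertwiner constant. Second, the passage from conjugacy to the exact identity \eqref{sdRelation} is not a free choice of representative: given \eqref{twistedSD}, setting $M(z) := T(z)^{-1} M_1'(z) T(z)$ produces \eqref{sdRelation} precisely when $S(z)$ admits a factorization $S(z) = T(z) T(z^{-1})^t$. This is a genuine constraint (it forces, for instance, $S(z)^t = S(z^{-1})$, which the explicit $S(z)$ happens to satisfy), and the paper must and does exhibit a concrete $T(z)$ in \eqref{factor}. Without these two inputs --- the inscribed parametrization making the gauge close up with constant $S(z)$, and the square-root factorization of $S(z)$ --- the proposal remains a program rather than a proof.
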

\begin{remark}
We call \eqref{sdRelation} \textit{self-duality} because the matrix $(M(z^{-1})^{-1})^t$ represents the scaled monodromy of the dual polygon. \end{remark}
\begin{remark}
From \eqref{sdRelation} it follows, in particular, that the matrix $M(1)$, i.e., the actual non-scaled monodromy of the polygon, is orthogonal. This has a geometric explanation: if a polygon is inscribed into a conic~$C$, then the corresponding monodromy should preserve~$C$. But the subgroup of projective transformations preserving a non-singular conic is conjugate (over $\C$) to $\mathrm{SO}_3 \subset \P\mathrm{GL}_3$.
\end{remark}
\begin{proof}[Proof of Lemma \ref{lemma2}]

Take any polygon $(v_i)$ in the equivalence class $P$, and let $C$ be the conic circumscribed about $(v_i)$. Using the isomorphism $C \simeq \P^1$, we get a well-defined notion of cross-ratio of four points on $C$. Referring to this cross-ratio, we set 
$$
p_i := 1 - [v_{i-2}, v_{i-1}, v_i, v_{i+1}].
$$
The sequence $(p_i)$ is $n$-periodic and depends only on the projective equivalence class of the polygon $(v_i)$. By Lemma 3.1 of~\cite{schwartz2011pentagram}, we have the following relation between $p_i$'s and the corner invariants:
$$
x_i =\frac{1 - p_i}{p_{i+1}}, \quad y_i =\frac{1 - p_{i+1}}{p_i}.
$$
Using these formulas and Corollary \ref{scaledMonCor}, we express the scaled monodromy in terms of $p_i$'s:
\begin{align}\label{prodFormula}
 M_i(z) := L_ i(z)L_{i+1}(z)\cdots L_{i+n-1}(z),\quad
L_{j}(z) :=\left(\begin{array}{ccc}0 & 0 & 1 \\ {(1 - p_j^{-1})( p_{j+1}^{-1} - 1)} & 0 & 1 \\0 & z^{-1}p_j^{-1}({p_{j+1} - 1}) & 1\end{array}\right) .
\end{align}
Further, we notice that the matrix $L_j(z)$ is related to the matrix
\begin{align*}
L_{j}'(z) :=  \left(\begin{array}{ccc}0 & 0 & 1 \\ p_j-1 & 0 & p_j \\0 & {z^{-1}}({p_j - 1})^{-1} & {p_j}({1-p_j})^{-1}\end{array}\right) \end{align*}
by a gauge type transformation:
$$
L_{j}'(z) := \frac{p_j}{1-p_j}Q_j L_j Q_{j+1}^{-1}, \quad Q_j := \mathrm{diag}({p_j^{-1}-1}, 1 - p_j, 1).
$$
Therefore, the matrix $M_i(z)$ is conjugate to a scalar multiple of
$
 M_i'(z) :=L_i'(z)L_{i+1}'(z) \cdots L'_{i+n-1}(z).
$
Further, observe that the matrices $L'_j(z)$ satisfy the relation
$$
(L_j(z^{-1})^{-1})^t = S(z)^{-1}L_j'(z)S(z), \quad S(z) := \left(\begin{array}{ccc}1 & 1 & -1 \\1 & 1 & z \\-1 & z^{-1} & 1\end{array}\right).
$$
Therefore, the same relation is satisfied by their product $M_i'(z)$:
\begin{align}\label{twistedSD}
(M_i'(z^{-1})^{-1})^t = S(z)^{-1}M_i'(z)S(z).
\end{align}
Further, write the matrix $S(z)$ as
\begin{align}\label{factor}S(z) = T(z)T(z^{-1})^t,\quad T(z) := \left(\begin{array}{ccc}1 & 0& 0\\1 & \sqrt{-1} &1 \\-1&  -\sqrt{-1} & z^{-1}\end{array}\right),
\end{align}
and set
$M(z) := T(z)^{-1}M_1'(z)T(z)$. Note that since $M(z)$ is conjugate to $M_1'(z)$, it is also conjugate to a scalar multiple of $M_1(z)$, and thus can be taken as a scaled monodromy matrix for $P$.
Furthermore, from~\eqref{twistedSD} and~\eqref{factor} it follows that $M(z)$ satisfies~\eqref{sdRelation}, as desired.
\end{proof}
\begin{proof}[Proof of Theorem \ref{thm1}]
Consider a polygon $(v_i)$ inscribed in a nondegenerate conic. By Lemma~\ref{lemma2}, the corresponding equivalence class $P \in \mathcal P_n$ admits a scaled monodromy matrix with self-duality relation~\eqref{sdRelation}. From self-duality, it immediately follows that the corresponding normalized characteristic polynomial \eqref{specPolynomial} satisfies
$$
R(z^{-1},w^{-1}) = -z^nw^{-3}R(z,w)
$$
which means that the spectral curve $R(z,w) = 0$ is invariant under the involution $(z,w) \leftrightarrow (z^{-1}, w^{-1})$. In view of Corollary~\ref{mainCor}, this shows that $E_k = O_k$ for all $k$, proving the theorem.
\end{proof}

\medskip

\section{Discussion and open questions}\label{secLast}
\paragraph*{Circumscribed polygons.} By duality, Theorem \ref{thm1} is also true for circumscribed polygons. Furthermore, there is a ``circumscribed'' analog of Theorem \ref{thm2}. Namely, circumscribed polygons fill another torus parallel to the Prym variety (see Figure \ref{IC}). Note that, in general, the orbit of an inscribed polygon under the pentagram map does not have to intersect the circumscribed locus. However, there are some special cases when a certain iteration of an inscribed polygon under the pentagram map is circumscribed \cite{schwartz2010elementary}. It would be interesting to find an algebraic geometric explanation of this phenomenon.
 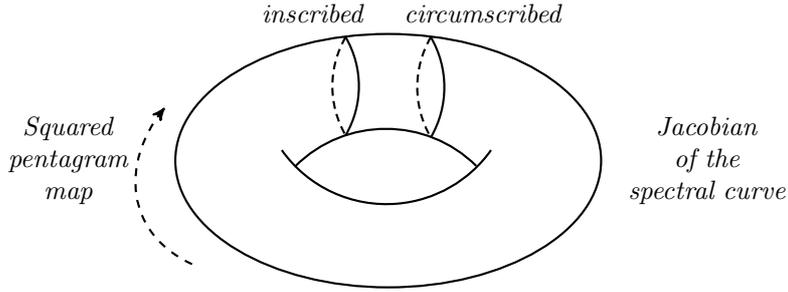
\begin{figure}[t]
\centerline{
\begin{tikzpicture}[thick, rotate = -90, scale = 1.4, every text node part/.style={align=center}]
    \draw (3,-3) ellipse (1.2cm and 2cm);
    \draw   (3.05,-3.87) arc (225:135:1.2cm);
    \draw   (2.9,-4) arc (-55:55:1.2cm);
    \draw  [densely dashed] (1.83,-3.4) arc (-120:-60:0.93cm);
        \draw   (1.83,-3.4) arc (120:60:0.93cm);
            \draw  [densely dashed] (1.83,-2.6) arc (-120:-60:0.95cm);
        \draw   (1.83,-2.6) arc (120:60:0.95cm);
        \node at (1.6,-3.7) () {\textit{inscribed}}; 
                \node at (1.6,-2.1) () {\textit{circumscribed}}; 
                    \draw [dashed, <-] (2.5,-5.1) .. controls (3, -5.5) and (3.7, -5.5)  .. (4,-4.8);
                              \node[text width=5cm] at (3,-6) () { \textit{Squared} \\ \textit{pentagram} \\ \textit{map}}; 
                                  \node[text width=5cm] at (3,0) (A) { \textit{Jacobian} \\ \textit{of the} \\ \textit{spectral curve}}; 
\end{tikzpicture}
}
\caption{Inscribed and circumscribed polygons.}\label{IC}
\end{figure}

\paragraph*{Continuous limit.} In the continuous limit, when the number of vertices of a polygon is large, while the vertices are close to each other, the pentagram map becomes the Boussinesq flow on curves \cite{ovsienko2010pentagram}. In the language of differential operators, the Boussinesq equation reads
$$
\partial_t L = [L, (L^{{2}/{3}})_+],
$$
where $L := \partial_x^3 + u(x) \partial_x + v(x),$
and $(L^{2/3})_+$ denotes the differential part of the pseudodifferential operator $L^{2/3}$. The continuous limit of both inscribed and circumscribed polygons are parametrized conics, corresponding to skew-adjoint operators~$L$. This, in particular, means that in the limit the inscribed and circumscribed tori in Figure \ref{IC} collide.\par
Similarly to how the pentagram map does not preserve inscribed polygons, the Boussinesq flow does not preserve the set of skew-adjoint operators $L$. However, odd flows of the Boussinesq hierarchy, that is, flows of the form
$$
\partial_t L = [L, (L^{{m}/{3}})_+],
$$
where $m \in \Z$ is odd and not divisible by $3$, do preserve skew-adjoint operators. Is there a discrete analog of this statement? Does there exist an explicit map commuting with the pentagram map and preserving the set of inscribed polygons?

\paragraph*{Poisson structure.} By Theorem \ref{thm1}, the moduli space of inscribed polygons is foliated by (open dense subsets of) Prym varieties. Does there exist a Poisson structure making this foliation into an algebraic completely integrable system? (Note that the Poisson structure of the pentagram map does not restrict to inscribed polygons.) \par
One possible approach to this problem is to use the fact that matrices satisfying self-duality relation~\eqref{sdRelation} form a subgroup in the loop group of $\P\mathrm{GL}_3$, somewhat similar to a twisted loop group. Does this subgroup admit a Poisson-Lie structure? If the answer to the latter question is positive, one should be able to use this structure to construct a Poisson bracket on inscribed polygons.

\paragraph*{Self-dual polygons.} Another class of polygons satisfying $E_k = O_k$ are {self-dual} polygons. Recall that, for a polygon $(v_i) \in \P^2$, the vertices of the \textit{dual polygon} are sides $[v_i, v_{i+1}]$ of the initial one, regarded as points in the dual projective plane. A polygon $(v_i) \in \P^2$ is called \textit{self-dual} if there exists some $m \in \Z$ and a projective map $  \P^2 \to (\P^2)^*$ taking the vertex $v_i$ to the vertex $[v_{i+m}, v_{i+m+1}]$ of the dual polygon. Furthermore, for polygons with odd number of vertices, there exists a canonical notion of self-duality. We say that a $(2q+1)$-gon $(v_i) \in \P^2$ is  \textit{canonically self-dual} if there exists a projective map $  \P^2 \to (\P^2)^*$ taking the vertex $v_i$ to the vertex $[v_{i+q}, v_{i+q+1}]$ of the dual polygon (i.e., to the opposite side). Canonically self-dual polygons form a maximal dimensional stratum in the set of self-dual polygons \cite{fuchs2009self}. Furthermore, one can show that canonically self-dual polygons correspond to a third torus in Figure \ref{IC} parallel to the inscribed and circumscribed tori. 
\par
This description suggests that there should exist a (possibly birational) isomorphism between the varieties of canonically self-dual and inscribed polygons. For closed $7$-gons and $9$-gons, such an isomorphism is constructed in \cite{schwartz2010elementary}. In particular, for $7$-gons it is given by the pentagram map. Are the varieties of canonically self-dual and inscribed polygons isomorphic in the general case?

\paragraph*{Poncelet polygons.}
For certain singular spectral curves, the inscribed and circumscribed tori in Figure~\ref{IC} coincide, giving rise to polygons which are simultaneously inscribed and circumscribed. Such polygons are known as \textit{Poncelet polygons}, since they are closely related to the famous Poncelet porism. In \cite{schwartz2007poncelet} it is proved that Poncelet polygons are fixed points for the squared pentagram map. Is there an algebraic geometric interpretation of this statement? Does the squared pentagram map have other fixed points? Are the fixed points given by Poncelet polygons Lyapunov stable?

\paragraph*{Degenerate conics.} How do the results of the present paper generalize to polygons inscribed in degenerate conics? Consider, in particular, a $2q$-gon with the following property: all its odd vertices lie on a straight line $l_1$, while all even vertices lie a on a line $l_2$. Such polygons exhibit an interesting behavior under the (inverse) pentagram map, known as the Devron property \cite{glick2015devron}. Is there an algebraic geometric explanation for the Devron property? Note that the corresponding spectral curves are highly singular and have the form
$$
(w - a^{-1}b^{-1})(w - az^{-q})(w - bz^{-q}) = 0
$$
for certain $a, b \in \C$. We believe that it should it be possible to relate the Devron property with algebraic geometry of such singular curves.

\medskip

\section{Appendix: Positivity of the monodromy invariants for convex polygons}
Following \cite{schwartz2011pentagram}, we define the following ``signed'' versions of the monodromy invariants:
$$
O^*_k := (-1)^kO_k, \quad E^*_k := (-1)^kE_k, \quad k = 1, \dots, \left[\frac{n}{2}\right],
$$
and also set
$
O^*_n := O_n$, $E^*_n := E_n.
$
The aim of this appendix is to prove the following result conjectured in~\cite{schwartz2011pentagram}.
\begin{theorem}\label{thm:positivity}
For convex closed polygons, one has $E_k^* > 0$, $O_k^* > 0$ for every $k$.
\end{theorem}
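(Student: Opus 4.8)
The plan is to read the monodromy invariants off the characteristic polynomial of the scaled monodromy $M(z)=M_i(z)$ from Corollary~\ref{scaledMonCor}, and to control their signs through the combinatorics of that matrix product. I would begin with the two easy invariants. Choosing a lift $V_i\in\R^3$ of the vertices compatible with the convex orientation, so that $\det(V_i,V_{i+1},V_{i+2})=1$ for all $i$, convexity forces $\det(V_i,V_{i+1},V_{i+3})>0$ and $\det(V_i,V_{i+2},V_{i+3})>0$; hence the coefficients in \eqref{diffEqn} satisfy $a_i>0$ and $b_i<0$, and the formulas for $x_i,y_i$ recalled in the proof of Lemma~\ref{lemma1} give $x_i>0$ and $y_i>0$ for every $i$. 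In particular $O_n^{*}=O_n=\prod x_i>0$ and $E_n^{*}=E_n=\prod y_i>0$, settling the top invariants; it remains to treat $E_k^{*},O_k^{*}$ for $1\le k\le[n/2]$.

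Comparing \eqref{specPolynomial} with the characteristic polynomial of $M(z)$ and using $\lambda(z)=(O_nE_n^2)^{1/3}$, one finds $\tr M(z)=1+\sum_k E_kz^{-k}$ and, dually, $\tr M(z)^{-1}=(O_nE_n)^{-1}(1+\sum_k O_kz^{k})$. Thus $E_k$ is the coefficient of $z^{-k}$ in $\tr M(z)$, while $O_k$ equals $O_nE_n$ times the coefficient of $z^{k}$ in $\tr M(z)^{-1}$. Now the signed invariants are produced exactly by the substitution $z\mapsto -z$, which changes only the single $z$-dependent entry $-z^{-1}y_j$ of $L_j(z)$ in \eqref{canScaledLax} into $+z^{-1}y_j$. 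Writing $\check M(z)$ for the resulting product, the theorem reduces to the statement that $\tr\check M(z)$ and $\tr\check M(z)^{-1}$ have positive coefficients (in $z^{-1}$, resp.\ $z$). I would expand each of these traces as a weighted sum over closed walks of length $n$ in the three-state transition graph of the $L_j$, the weight recording the product of traversed entries and the power $z^{-k}$ recording the number of $3\to 2$ steps.

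The natural way to finish would be to show that $\check M(z)$ is totally positive and to apply the Lindström--Gessel--Viennot lemma, which would exhibit every minor --- in particular the principal minors assembling $\tr\check M$ and $\tr\check M^{-1}$ --- as a sum of monomials with nonnegative coefficients. The crux, and the step I expect to be genuinely hard, is that the factors $L_j$ are \emph{not} totally nonnegative: since $x_j,y_j>0$, the entry $-x_jy_j$ is negative, and a direct check shows that no diagonal gauge $L_j\mapsto Q_j^{-1}L_jQ_{j+1}$ periodic around the cycle can make all entries positive, because the induced sign constraints are inconsistent. Consequently the walk sums really do contain terms of both signs, weighted by $(-1)^{\#\{2\to 1\ \mathrm{steps}\}}$, so that positivity of each coefficient is a genuine cancellation rather than a term-by-term fact. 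My plan for this crux is to make the cancellation structural: either group consecutive factors $L_jL_{j+1}\cdots$ into blocks that \emph{are} totally nonnegative and apply Lindström--Gessel--Viennot to the blocked network, or construct a weight-preserving sign-reversing involution annihilating the walks with an odd number of $2\to 1$ steps, or --- in the spirit of the rest of the paper --- use convexity to pin down the real structure of the spectral curve \eqref{specPolynomial} and deduce by oscillation theory that the relevant symmetric functions of the eigenvalues carry the predicted sign $(-1)^k$. I expect the oscillation-theoretic route, applied after the substitution $z\mapsto-z$ that makes the convex geometry manifest, to be the cleanest.
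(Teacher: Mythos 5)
Your reduction is set up correctly---the identities $\tr\, M_i(z)=1+\sum_k E_k z^{-k}$ and $\tr\, M_i(z)^{-1}=(O_nE_n)^{-1}\bigl(1+\sum_k O_k z^k\bigr)$ are right, and the substitution $z\mapsto -z$ does convert the theorem into positivity of trace coefficients---but the proposal stops exactly at its self-declared crux, so it is not a proof. None of the three routes you sketch (blocked Lindstr\"om--Gessel--Viennot, a sign-reversing involution, oscillation theory on the real spectral curve) is carried out, and the last is especially vague: reality of the spectral curve does not by itself determine the sign of each coefficient $E_k$. Worse, the premise motivating all three---that the negative entry $-x_jy_j$ forces a ``genuine cancellation''---is mistaken, because your impossibility check only scans \emph{diagonal} gauges and implicitly insists that the entries remain monomials in $x_i,y_i$. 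The paper removes both restrictions at once: conjugating every factor by the single constant \emph{unipotent} matrix $U$ with rows $(1,0,0),(0,1,0),(1,0,1)$ gives $L_j(z)=U\hat L_j(z)U^{-1}$ with
\[
\hat L_j(z)=\left(\begin{array}{ccc}1 & 0 & 1 \\ 1-x_jy_j & 0 & 1 \\ 0 & -z^{-1}y_j & 0\end{array}\right),
\]
whose nonzero entries are positive in the generators $y_j$, $1-x_jy_j$, and $-z^{-1}$. Hence $\tr\, M_i(z)=\tr\,\hat L_i(z)\cdots\hat L_{i+n-1}(z)$ is \emph{term-by-term} positive in those generators, exhibiting each $E_k^*$ as a polynomial with positive coefficients in $y_i$ and $1-x_iy_i$; no cancellation, involution, or total positivity machinery is needed. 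The $O_k^*$ side is handled identically after replacing $L_j(z)$ by $L_j^*(z):=\frac{z}{y_j}\,\mathrm{adj}\,L_j(z)$ (which implements your $\tr\, M(z)^{-1}$ observation at the level of factors) and $U$ by another constant matrix $V$. The decisive idea your sign analysis ruled out of consideration is the change of generators from $\{x_i,y_i\}$ to $\{y_i,\,1-x_iy_i\}$ combined with a non-diagonal conjugation.

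There is a second, smaller gap: your step 1 only yields $x_i,y_i>0$, which suffices for $E_n^*,O_n^*>0$ but not for the remaining invariants. Any argument along these lines---including each of your three proposed routes---must ultimately consume the inequality $1-x_iy_i>0$, i.e.\ the stronger fact that $0<x_i,y_i<1$ for convex \emph{closed} polygons, which the paper invokes at the final step and your outline never establishes. Positivity of $x_i,y_i$ alone holds for many non-convex twisted polygons, for which the sign pattern $E_k^*>0$, $O_k^*>0$ is not to be expected, so this inequality is where convexity genuinely enters and cannot be omitted.
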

The proof is based on the following lemma.
\begin{lemma}\label{lemma:pos}
\begin{enumerate}
\item The signed monodromy invariants $E_k^*$ can be written as polynomials with positive coefficients in terms of $y_i, 1-x_iy_i$.
\item The signed monodromy invariants $O_k^*$ can be written as polynomials with positive coefficients in terms of $x_i, 1-x_iy_i$. 
\end{enumerate}
\end{lemma}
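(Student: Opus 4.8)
The plan is to obtain both families of invariants as coefficients of traces of explicit matrix products, and then to prove positivity by a transfer-matrix expansion combined with a resummation that assembles the combinations $1-x_iy_i$.

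First I would record the relevant generating functions. By Corollary~\ref{scaledMonCor} the scaled monodromy is $M(z)=L_1(z)\cdots L_n(z)$ with $L_j(z)$ as in \eqref{canScaledLax}, and a direct computation gives $\det M(z)=z^{-n}O_nE_n^2$. Writing the normalized characteristic polynomial \eqref{specPolynomial} as $\lambda^{-3}\det(\lambda wI-M(z))$ with $\lambda^{3}=z^{n}\det M(z)$ and matching the coefficients of $w^2$ and $w$, the normalization factors cancel and one finds
\[ \tr M(z)=1+\sum_k E_k z^{-k}, \qquad \tr(\wedge^2 M(z))=E_n z^{-n}\Bigl(1+\sum_k O_k z^{k}\Bigr). \]
Since $\wedge^2 L_j(z)=y_j\,\ell_j(z)$ for an explicit matrix $\ell_j(z)$ with polynomial entries, the factor $E_n$ comes out of the second trace, so $E_k$ and $O_k$ are coefficients of $\tr M(z)$ and of $\tr(\ell_1(z)\cdots\ell_n(z))$ respectively.

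Next I would pass to the signed invariants by the substitution $z\mapsto -z$, which replaces $E_k,O_k$ by $E_k^*,O_k^*$, and expand each trace as a sum over closed walks of length $n$ in the weighted digraph on $\{1,2,3\}$ whose edges are the nonzero matrix entries. For $\tr M(-z)$ every edge weight is a nonnegative monomial in the $y_i$ except the single entry $-x_jy_j$ on the transition $2\to1$ into the auxiliary vertex $1$; the crucial structural point is that this vertex has in-degree and out-degree one, so each closed walk splits into excursions from the base vertex $3$, and each excursion either returns directly ($2\to3$) or makes a detour through the auxiliary vertex ($2\to1\to3$). The heart of the argument is then a resummation: I would pair each detour (weight $-x_jy_j$) with the excursion of the same length obtained by replacing it by $2\to3$ followed by a self-loop $3\to3$ (weight $+1$), so that the two weights add to $1-x_jy_j$. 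Carrying this out at every excursion factorizes the walk sum and exhibits $\tr M(-z)$ as a sum of monomials in the $y_i$ and $1-x_iy_i$ with nonnegative coefficients, which is part~1; the smallest case reads $E_1^*=\sum_i y_i(1-x_{i+1}y_{i+1})$. Part~2 is parallel, applied to $\wedge^2 M(z)$: here the base vertex is $1$, the short excursion is $1\to2\to1$ and the long one $1\to3\to2\to1$ passes through the auxiliary vertex $3$; pairing the long excursion with a self-loop $1\to1$ followed by the short excursion again produces $1-x_jy_j$, giving $O_k^*$ as a nonnegative combination of the $x_i$ and $1-x_iy_i$, with $O_1^*=\sum_i x_{i+1}(1-x_iy_i)$.

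The main obstacle is to make this pairing into a rigorous, weight-preserving involution when a walk contains several excursions, so that the global sign $(-1)^k$ is accounted for correctly and exactly the intended factors $1-x_iy_i$ are produced with no surviving cross-terms of the wrong sign. The cleanest way to organize this is to index walks by the cyclic positions of the descents together with the lengths of the intervening segments, and at each segment of length at least two to sum the two sub-cases independently; this factorizes the sum over each segment and makes nonnegativity manifest. The $O$-case requires slightly more care, since the transfer matrix $\wedge^2 L_j(z)$ has more nonzero entries and the long excursion has length three, so the matching self-loop configuration must be identified with care.
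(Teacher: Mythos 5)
Your proposal is correct in substance, and its generating-function setup coincides with the paper's: both read off the $E_k$ from $\tr M(z)$ and the $O_k$ from the second exterior power (the paper works with the adjugate, $L_j^*(z) = \frac{z}{y_j}\mathrm{adj}(L_j(z))$, which is your $\ell_j$ up to transpose and ordering), and your identities $\det M(z)=z^{-n}O_nE_n^2$, $\tr M(z)=1+\sum_k E_kz^{-k}$, $\tr(\wedge^2M(z))=E_nz^{-n}(1+\sum_k O_kz^k)$ all check out against \eqref{specPolynomial}. Where you genuinely diverge is the positivity mechanism: the paper avoids all walk combinatorics by a one-line gauge transformation, conjugating $L_j(z)$ by the constant matrix $U=\bigl(\begin{smallmatrix}1&0&0\\0&1&0\\1&0&1\end{smallmatrix}\bigr)$ to get $\hat L_j(z)$ whose nonzero entries are exactly $1$, $1-x_jy_j$, $-z^{-1}y_j$; since the trace is conjugation-invariant, positivity of $\tr M_i(z)$ in the alphabet $\{y_j,\,1-x_jy_j,\,-z^{-1}\}$ is immediate, and a parallel conjugation by a matrix $V$ handles the $O$-side. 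Your resummation is precisely the combinatorial shadow of this conjugation (closed walks in the conjugated digraph correspond weight-preservingly to your resummed configurations), and the completion you sketch in the final paragraph does work --- in fact no involution is needed: once the cyclic set of excursion starting positions is fixed, the choice short-versus-detour is free and independent within each gap, so the walk sum factorizes over gaps, a gap of length $d\ge 3$ contributing $z^{-1}y_j(1-x_{j+1}y_{j+1})$ and a gap of length exactly $2$ contributing the bare $z^{-1}y_j$; note your ``segment of length at least two'' should read ``at least three,'' since when the next excursion starts immediately the detour does not fit --- which is also exactly why no cross-terms of the wrong sign survive. Two smaller points deserve care: in the $O$-case, after $z\mapsto -z$ the self-loop weight is $-z^{-1}$ and a global factor $(-1)^n$ appears, so your statement that all edges are nonnegative except one is true only on the $E$-side, and the pairing must be done after normalizing by the all-loops walk (after which your ``loop then short'' pairing correctly produces $x_{j+1}(1-x_jy_j)$, consistent with your formula for $O_1^*$); and the trivial cases $E_n^*=E_n$, $O_n^*=O_n$ should be dispatched separately, as the paper does. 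The trade-off: the paper's conjugation is shorter and gives positivity of every entry of $\hat M_i(z)$, not just the trace, while your expansion buys explicit, manifestly positive closed formulas for each $E_k^*$ and $O_k^*$ as sums over gap configurations.
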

\begin{proof}
We begin with the first statement. Since the result obviously holds for $E_n^* = E_n$, we only consider the case $k \leq [n / 2]$. 
Using the matrix $M_i(z)$ from Corollary \ref{scaledMonCor} and the definition of the monodromy invariants, we get
$$
1+\sum_{k=1}^{[n/2]}E^*_k(-z)^{-k} =\, \mathrm{trace}\, M_i(z).
$$
Therefore, to prove that $E_k^*$ is a polynomial with positive coefficients in terms of $y_i, 1-x_iy_i$, it suffices to show that  $\mathrm{trace}\, M_i(z)$ is a polynomial with positive coefficients in $y_i$, $1-x_iy_i$, and $-z^{-1}$. To that end, notice that the matrix $L_j(z)$ from Corollary \ref{scaledMonCor} can be written as 
$$
L_j(z) = U\hat L_j(z)U^{-1}, \quad
 \hat L_j(z) := \left(\begin{array}{ccc}1 & 0 & 1 \\1-x_jy_j & 0 & 1 \\0 & -z^{-1}y_j & 0\end{array}\right), \quad U := \left(\begin{array}{ccc}1 & 0 & 0 \\0 & 1 & 0 \\1 & 0 & 1\end{array}\right).
$$
Therefore, the product $M_i(z)=  L_i(z)\cdots  L_{i+n-1}(z)$ satisfies the same relation
$$
M_i(z) = U \hat M_i(z)U^{-1}, \quad \hat M_i(z) := \hat L_i(z)\cdots \hat L_{i+n-1}(z).
$$
To complete the proof, notice that since all non-zero entries of the matrix $ \hat L_j(z)$ are polynomials with positive coefficients in terms of $y_i$, $1-x_iy_i$, and $-z^{-1}$, the same is true for the entries of the matrix $ \hat M_i(z)$, and thus $\mathrm{trace}\, \hat M_i(z) = \mathrm{trace}\,  M_i(z)$, as desired.\par
Now, we prove the second statement. Let 
$$
L_j^*(z) := \frac{z}{y_j}\mathrm{adj}(L_j(z)) = \left(\begin{array}{ccc}1 & -1 & 0 \\x_j z & 0 & -x_j z \\x_jy_j & 0 & 0\end{array}\right),
$$
where $\mathrm{adj}$ stands for the adjoint, i.e., the transposed cofactor matrix, and $L_j(z)$ is the matrix from Corollary~\ref{scaledMonCor}. Then we have$$
1+\sum_{k=1}^{[n/2]}O^*_k(-z)^{k} = \mathrm{trace}\, M^*_i(z), \quad  M_i^*(z) := L_i^*(z)\cdots  L^*_{i+n-1}(z).
$$
Now, we notice that $L_j^*(z)$ can be written as
$$
L_j^*(z) = V\hat L_j^*(z) V^{-1}, \quad
 \hat L_j^*(z) := \left(\begin{array}{ccc}0& 1 & 1 - x_jy_j \\-x_jz & 0 & 0 \\0 & 1 & 1\end{array}\right), \quad V := \left(\begin{array}{ccc}0 & 0 & 1 \\0 & -1 & 0 \\-1 & 0 & 1\end{array}\right),
$$
and then apply the same argument as in the proof of the first statement. Thus the lemma is proved.
\end{proof}
\begin{remark}
As was observed by R.\,Schwartz, Lemma~\ref{lemma:pos} can also be proved using the concepts of \textit{right and left modifications} introduced in Section 2.2 of \cite{schwartz2008discrete}.
\end{remark}
\begin{proof}[Proof of Theorem \ref{thm:positivity}]
It is easy to see that for convex closed polygons the corner invariants satisfy the inequalities $0 < x_i, y_i < 1.$
So, for such polygons one has $x_i, y_i, 1-x_iy_i > 0$, and the result of the theorem follows from Lemma~\ref{lemma:pos}.
\end{proof}

\bibliographystyle{plain}
\bibliography{oe.bib}

\begin{thebibliography}{10}

\bibitem{felipe2015}
R.~Felipe and G.~Mar{\'\i}~Beffa.
\newblock The pentagram map on {G}rassmannians.
\newblock {\em arXiv:1507.04765}, 2015, to appear in Ann. Inst. Fourier.

\bibitem{fock2014loop}
V.V. Fock and A.~Marshakov.
\newblock Loop groups, clusters, dimers and integrable systems.
\newblock {\em arXiv:1401.1606}, 2014.

\bibitem{fuchs2009self}
D.~Fuchs and S.~Tabachnikov.
\newblock Self-dual polygons and self-dual curves.
\newblock {\em Funct. Anal. Other Math.}, 2(2-4):203--220, 2009.

\bibitem{Gekhtman2016}
M.~Gekhtman, M.~Shapiro, S.~Tabachnikov, and A.~Vainshtein.
\newblock Integrable cluster dynamics of directed networks and pentagram maps,
  \rm{with an appendix by {A}.\,{I}zosimov}.
\newblock {\em Adv. Math.}, 300:390--450, 2016.

\bibitem{GLICK20111019}
M.~Glick.
\newblock The pentagram map and {Y}-patterns.
\newblock {\em Adv. Math.}, 227(2):1019--1045, 2011.

\bibitem{glick2015devron}
M.~Glick.
\newblock The {D}evron property.
\newblock {\em J. Geom. Phys.}, 87:161--189, 2015.

\bibitem{glick2015}
M.~Glick and P.~Pylyavskyy.
\newblock Y-meshes and generalized pentagram maps.
\newblock {\em Proc. London Math. Soc.}, 112(4):753--797, 2016.

\bibitem{kedem2015t}
R.~Kedem and P.~Vichitkunakorn.
\newblock T-systems and the pentagram map.
\newblock {\em J. Geom. Phys.}, 87:233--247, 2015.

\bibitem{khesin2013}
B.~Khesin and F.~Soloviev.
\newblock Integrability of higher pentagram maps.
\newblock {\em Math. Ann.}, 357(3):1005--1047, 2013.

\bibitem{khesin2016}
B.~Khesin and F.~Soloviev.
\newblock The geometry of dented pentagram maps.
\newblock {\em J. Eur. Math. Soc.}, 18:147--179, 2016.

\bibitem{beffa2015}
G.~Mar{\'\i}~Beffa.
\newblock On integrable generalizations of the pentagram map.
\newblock {\em Int. Math. Res. Not.}, 2015(12):3669--3693, 2015.

\bibitem{ovsienko2010pentagram}
V.~Ovsienko, R.~Schwartz, and S.~Tabachnikov.
\newblock The pentagram map: a discrete integrable system.
\newblock {\em Comm. Math. Phys.}, 299(2):409--446, 2010.

\bibitem{ovsienko2013liouville}
V.~Ovsienko, R.~Schwartz, and S.~Tabachnikov.
\newblock Liouville--{A}rnold integrability of the pentagram map on closed
  polygons.
\newblock {\em Duke Math. J.}, 162(12):2149--2196, 2013.

\bibitem{schwartz1992pentagram}
R.~Schwartz.
\newblock The pentagram map.
\newblock {\em Exp. Math.}, 1(1):71--81, 1992.

\bibitem{schwartz2001pentagram}
R.~Schwartz.
\newblock The pentagram map is recurrent.
\newblock {\em Exp. Math.}, 10(4):519--528, 2001.

\bibitem{schwartz2007poncelet}
R.~Schwartz.
\newblock The {P}oncelet grid.
\newblock {\em Adv. Geom.}, 7(2):157--175, 2007.

\bibitem{schwartz2008discrete}
R.~Schwartz.
\newblock Discrete monodromy, pentagrams, and the method of condensation.
\newblock {\em J. Fixed Point Theory Appl.}, 3(2):379--409, 2008.

\bibitem{schwartz2015pentagram}
R.~Schwartz.
\newblock The pentagram integrals for {P}oncelet families.
\newblock {\em J. Geom. Phys.}, 87:432--449, 2015.

\bibitem{schwartz2010elementary}
R.~Schwartz and S.~Tabachnikov.
\newblock Elementary surprises in projective geometry.
\newblock {\em Math. Intelligencer}, 32(3):31--34, 2010.

\bibitem{schwartz2011pentagram}
R.~Schwartz and S.~Tabachnikov.
\newblock The pentagram integrals on inscribed polygons.
\newblock {\em Electron. J. Combin.}, 18(1):171, 2011.

\bibitem{soloviev2013integrability}
F.~Soloviev.
\newblock Integrability of the pentagram map.
\newblock {\em Duke Math. J.}, 162(15):2815--2853, 2013.

\end{thebibliography}

\end{document}